\date{}
\newcommand{\executeiffilenewer}[3]{%
 \ifnum\pdfstrcmp{\pdffilemoddate{#1}}%
 {\pdffilemoddate{#2}}>0%
 {\immediate\write18{#3}}\fi%
}
\newcommand{%
 \input{.pdf_tex}%
}[1]{%
 \input{#1.pdf_tex}%
}
\newtheorem{theorem}{Theorem}[section]
\newtheorem{lemma}[theorem]{Lemma}
\newtheorem{proposition}[theorem]{Proposition}
\newenvironment{proof}[1][Proof]{\begin{trivlist}
\item[\hskip \labelsep {\bfseries #1}]}{\end{trivlist}}
\newcommand{\qed}{\nobreak \ifvmode \relax \else
      \ifdim\lastskip<1.5em \hskip-\lastskip
      \hskip1.5em plus0em minus0.5em \fi \nobreak
      \vrule height0.5em width0.5em depth0.25em\fi}
\newcommand{\TT}{\ensuremath{\mathsf{\tiny{T}}}}
\newcommand{\T}{^{\TT}}
\newcommand{\diff}[1][]{\mathrm{d}#1}
\newcommand{\dt}{\diff t }
\author{Michael Muehlebach and Raffaello D'Andrea
\thanks{Michael Muehlebach and Raffaello D'Andrea are with the Institute for Dynamic Systems and Control, ETH Zurich. The contact author is Michael Muehlebach, {\tt\small michaemu@ethz.ch}.
This work was supported by ETH-Grant ETH-48 15-1.}%
}
\title{\LARGE \bf
On the Approximation of Constrained Linear Quadratic Regulator Problems and their Application to Model Predictive Control - Supplementary Notes}
\begin{document}

\maketitle

\begin{abstract}
By parametrizing input and state trajectories with basis functions different approximations to the constrained linear quadratic regulator problem are obtained. These notes present and discuss technical results that are intended to supplement a corresponding journal article. The results can be applied in a model predictive control context.
\end{abstract}

\section{Outline}
%
Sec.~\ref{Sec:Not} summarizes the notation. The resulting approximations are given by convex finite-dimensional optimization problems that have a quadratic cost, linear equality constraints, and linear semi-infinite inequality constraints. Sec.~\ref{Sec:SemiinfiniteConst} suggests several strategies for dealing with these semi-infinite constraints. One of these strategies is discussed in more details in Sec.~\ref{Sec:activeSet}, as it is found to be computationally efficient in a model predictive control context. The remaining sections discuss several technical results. Sec.~\ref{App:PropB} and Sec.~\ref{App:PropC} are related to an upper, respectively lower bound approximation of the cost of the underlying constrained linear quadratic regulator problem. Sec.~\ref{App:SetConv} presents a result that is related to the convergence of the different approximations. Sec.~\ref{App:Stab} shows recursive feasibility and closed-loop stability of the resulting model predictive control algorithm. Sec.~\ref{App:Decay} and Sec.~\ref{App:BasisFunProp} highlight important properties of the basis functions parametrizing input and state trajectories.
\section{Notation}\label{Sec:Not}
We are concerned with the approximation of the following optimal control problem
\begin{align}
J_\infty := &\min \frac{1}{2} \left( ||x||^2 + ||u||^2 \right) + \psi(x_T) \label{eq:OrgProb}
\\
&\text{s.t.}~~\dot{x}=A x+ B u, x(0)=x_0, x(T)=x_T,\nonumber \\
&\qquad C_\text{x} x+C_\text{u} u \leq b, x_T \in \mathcal{X},\nonumber \\
&\qquad x \in L_n^2, u \in L_m^2, \nonumber
\end{align}
where the space of square integrable functions mapping from the interval $I:=(0,T)$ to $\mathbb{R}^q$ is denoted by $L_q^2$, where $q$ is a positive integer; and the $L_q^2$-norm is defined as the map $L_q^2 \rightarrow \mathbb{R}$,
\begin{equation}
x \rightarrow ||x||, \quad ||x||^2 := \int_{I} x\T x~\dt,
\end{equation}
with $\dt$ the Lebesgue measure. The function $\psi: \mathbb{R}^n \rightarrow \mathbb{R}$ is assumed to be positive definite and strongly convex, $A\in \mathbb{R}^{n\times n}$, $B\in \mathbb{R}^{n \times m}$, $C_\text{x} \in \mathbb{R}^{n_c \times n}$, $C_\text{u} \in \mathbb{R}^{n_c \times m}$, and $b\in \mathbb{R}^{n_c}$ are constant, and the set $\mathcal{X}$ is closed and convex. The dynamics as well as the stage constraints are assumed to be fulfilled almost everywhere. Thus we simply write
\begin{equation}
f=g, \quad f\leq g,
\end{equation}
when we mean $f(t)=g(t)$, respectively $f(t)\leq g(t)$ for all $t\in I$ almost everywhere, with $f,g \in L^2_q$.
The weak derivative of $x$ is denoted by $\dot{x}$.\footnote{The equations of motion imply that $\dot{x} \in L^2_n$, which can be used to conclude that $x$ has a unique absolutely continuous representative defined on the closure of $I$ (a classical solution of the equations of motion). With $x(t)$ we refer to the value this unique absolutely continuous representative takes at time $t\in [0,T]$.} To simplify notation we abbreviate the domain of the objective function by 
\begin{equation}
X:=L^2_n \times L^2_m \times \mathbb{R}^n.
\end{equation}

We assume throughout the article that the constraints in \eqref{eq:OrgProb} are nonempty, i.e. there exists trajectories $x$ and $u$, fulfilling the dynamics, the initial condition, the constraints, and thus achieve a finite cost. 

The main motivation for studying problem \eqref{eq:OrgProb} comes from the fact that \eqref{eq:OrgProb} often serves as a starting point for model predictive control (MPC).

The problem \eqref{eq:OrgProb} can be written in compact form as
\begin{align}
J_\infty = &\min ||x||^2 + ||u||^2 + \psi(x_T) \nonumber\\
&\text{s.t.}~(x,u,x_T)\in \mathcal{C} \cap \mathcal{D}, \label{eq:RefOrgProb}
\end{align}
where the set $\mathcal{C}$ contains all trajectories $x \in L^2_n, u \in L^2_m$ that fulfill the constraints and $\mathcal{D}$ the trajectories that fulfill the dynamics and the initial condition, that is,
\begin{align}
\mathcal{C}&:=\{ (x,u,x_T) \in X \;|\; C_\text{x} x + C_\text{u} u \leq b, x_T \in \mathcal{X}\}, \label{eq:defMX}\\
\mathcal{D}&:=\{(x,u,x_T)\in X | \dot{x}=Ax + Bu, x(0)=x_0, x(T)=x_T\}. \label{eq:defD}
\end{align}
It was shown that \eqref{eq:RefOrgProb} can be approximated by two auxiliary optimization problems, where input and state trajectories are parameterized with basis functions,
\begin{equation*}
\tilde{x}(t)=(I_n \otimes \tau^s(t))\T \eta_x, \quad \tilde{u}(t)=(I_m \otimes \tau^s(t))\T \eta_u.
\end{equation*}
The parameter vectors are denoted by $\eta_x \in \mathbb{R}^{ns}$ and $\eta_u \in \mathbb{R}^{ms}$, whereas $\tau^s(t):=(\tau_1(t), \tau_2(t), \dots, \tau_s(t))\in \mathbb{R}^s$ contains the first $s$ basis functions, $\otimes$ denotes the Kronecker product, and $I_q \in \mathbb{R}^{q\times q}$ refers to the identity matrix for any integer $q>0$. The superscript $s$ refers to the number of basis functions used for the approximation. For ease of notation the superscript $s$ will be dropped whenever the number of basis functions is clear from context. The basis functions are required to satisfy the following assumptions:
\begin{itemize}
\item[A1)] The basis functions $\tau_i \in L_1^2$, $i=1,2,\dots,s$ are linearly independent and orthonormal with respect to the standard $L_1^2$-scalar product.
\item[A2)] The basis functions fulfill $\dot{\tau}^s(t) = M_s \tau^s(t)$ for all $t\in I$, for some $M_s \in \mathbb{R}^{s \times s}$.
\end{itemize}
The finite-dimensional subspace spanned by the first $s$ basis functions will be denoted by $X^s$,
\begin{equation}
X^s:=\{ (x,u,x_T)\in X \; | \; \eta_x\in \mathbb{R}^{ns}, \eta_u\in \mathbb{R}^{ms},
x=(I_n \otimes \tau^s)\T \eta_x, u=(I_m\otimes \tau^s)\T \eta_u\}.
\end{equation}
We can think of an element in $X^s$ not only as an element in $X$ (i.e. a tuple of a finite-dimensional vector and two square integrable functions), but also as a finite-dimensional vector given by the corresponding parameter vectors $\eta_x$ and $\eta_u$. To make this distinction explicit, we introduce the map $\pi^{qs}: L_q^2 \rightarrow \mathbb{R}^{qs}$, defined as
\begin{equation}
x \rightarrow \int_{I} (I_q \otimes \tau^s) x \dt, \label{eq:proj}
\end{equation}
which maps an arbitrary element $x\in L^2_n$ to its first $s$ basis function coefficients. Similarly, we define $\pi^s: X \rightarrow \mathbb{R}^{ns} \times \mathbb{R}^{ms} \times \mathbb{R}^n$ as
\begin{equation}
(x,u,x_T) \rightarrow (\pi^{ns}(x), \pi^{ms}(u),x_T).
\end{equation}
As a consequence, we write $\pi^s(x)$ for describing the finite dimensional representation of $x \in X^s$, that is, its representation in terms of the parameter vectors $\eta_x$ and $\eta_u$. The adjoint map $(\pi^{qs})^*: \mathbb{R}^{qs} \rightarrow L_q^2$ is given by
\begin{equation}
\eta \rightarrow (I_q \otimes \tau)\T \eta, \label{eq:incl}
\end{equation}
and is used to obtain the trajectory corresponding to the vector $\eta \in \mathbb{R}^{qs}$, containing the first $s$ basis function coefficients. Similarly, we define $(\pi^s)^*: \mathbb{R}^{ns} \times \mathbb{R}^{ms} \times \mathbb{R}^n \rightarrow X$ as
\begin{equation}
(\eta_x, \eta_u, x_T) \rightarrow  ((\pi^{ns})^*(\eta_x), (\pi^{ms})^*(\eta_u), x_T).
\end{equation}
The composition $(\pi^{s})^* \pi^{s}: X \rightarrow X$ yields the projection of an element $x\in X$ onto the subspace $X^s \subset X$.

The sets $\mathcal{C}$ and $\mathcal{D}$ are approximated in two different ways,
\begin{align}
\mathcal{C}^s_\text{U}&:=\mathcal{C} \cap X^s,\\
\mathcal{C}^s_\text{L}&:=\{ (x, u, x_T) \in X \; | \; \int_{I} \delta \tilde{p}\T (-C_\text{x} {x} - C_\text{u} {u} + b) \dt \geq 0  \nonumber\\
&\hspace{4cm}\forall \delta \tilde{p}=(I_{n_c} \otimes \tau)\T \delta \eta_p: \delta \tilde{p}\geq 0, \delta \eta_p\in \mathbb{R}^{n_c s}; x_T \in \mathcal{X} \}, \label{eq:deftX} \\
\mathcal{D}^s_\text{U}&:=\mathcal{D} \cap X^s,\\
\mathcal{D}^s_\text{L}&:=\{ (x,u,x_T)\in X \; | \;  \delta \tilde{p}=(I_n\otimes \tau)\T \delta \eta_p, \nonumber \\
&\!\!-\int_{I} \delta \dot{\tilde{p}}\T x \dt -\int_{I} \delta \tilde{p}\T (Ax + Bu)\dt 
- \delta \tilde{p}(0)\T x_0 + \delta \tilde{p}(T)\T x_T = 0, \forall \delta \eta_p \in \mathbb{R}^{ns}\}, \label{eq:deftD}
\end{align}
resulting in the following two approximations to \eqref{eq:RefOrgProb}
\begin{align}
J_s:=\inf ||\tilde{x}||^2 + ||\tilde{u}||^2 + \psi(x_T) \nonumber\\
\text{s.t.}~(\tilde{x},\tilde{u},x_T) \in \mathcal{C}^s_\text{U}\cap \mathcal{D}^s_\text{U}, \label{eq:approx1}
\end{align}
and
\begin{align}
\tilde{J}_s:=\min ||\tilde{x}||^2 + ||\tilde{u}||^2 + \psi(x_T)\nonumber\\
\text{s.t.}~(\tilde{x},\tilde{u},x_T) \in \mathcal{C}^s_\text{L}\cap \mathcal{D}^s_\text{L}. \label{eq:approx2}
\end{align}
By definition of the constraints $\mathcal{C}^s_\text{U}$ and $\mathcal{D}^s_\text{U}$, the minimizer of \eqref{eq:approx1} (for $s\geq s_0$) is required to be an element of $X^s$. Consequently, the problem \eqref{eq:approx1} is equivalent to
\begin{align}
J_s = &\inf |\eta_x|^2 + |\eta_u|^2 + \psi(x_T) \nonumber\\
& \text{s.t.}~(\eta_x, \eta_u, x_T) \in \pi^s(\mathcal{C}^s_\text{U}) \cap \pi^s(\mathcal{D}^s_\text{U}), \label{eq:approx3}
\end{align}
which corresponds to a convex finite-dimensional optimization problem.
Likewise, it is found that the convex finite-dimensional problem 
\begin{align}
\tilde{J}_s = &\min |\eta_x|^2 + |\eta_u|^2 + \psi(x_T) \nonumber \\
&\text{s.t.}~(\eta_x,\eta_u,x_T) \in \pi^s(\mathcal{C}^s_\text{L}) \cap \pi^s(\mathcal{D}^s_\text{L}),  \label{eq:approx4}
\end{align}
is equivalent to \eqref{eq:approx2} in the sense that its (unique) minimizer $(\eta_x,\eta_u,\bar{x}_T)$ is related to the minimizer $(x,u,x_T) \in X$ of \eqref{eq:approx2} by $x=(\pi^{ns})^*(\eta_x)$, $u=(\pi^{ms})^*(\eta_u)$, $\bar{x}_T=x_T$, and achieves the same cost.

The proposed approximations can be applied in the context of MPC. We will show that by repeatedly solving the infinite-horizon optimal control problem \eqref{eq:approx3} (with $I=(0,\infty)$, $\psi=0$, $x_T = \lim_{t\rightarrow \infty} x(t)=0$), recursive feasibility and closed-loop stability are inherent to the resulting MPC algorithm. For the sake of completeness, \eqref{eq:approx3} is written out as
\begin{align}
&\min~\eta_x\T (Q \otimes I_s) \eta_x + \eta_u\T (R \otimes I_s) \eta_u \label{eq:mpc1}\\
&~\text{s.t.} ~(I_n \otimes M_s\T - A \otimes I_s) \eta_x - (B \otimes I_s) \eta_u = 0,\nonumber \\
&~~~ \quad (I_n \otimes \tau(0))\T \eta_x = x_0, \nonumber\\
&~~~ \quad (C_\text{x} \otimes \tau(t))\T \eta_x + (C_\text{u} \otimes \tau(t))\T \eta_u \leq b, \forall t \in [0,\infty), \nonumber
\end{align}
where the input and state costs are weighted with the matrices $Q>0$ and $R>0$, which is common in MPC.

In the following an efficient numerical solution algorithm for \eqref{eq:mpc1} will be discussed.

\section{Implementation of the semi-infinite constraint}\label{Sec:SemiinfiniteConst}
The optimization \eqref{eq:mpc1} is a convex finite-dimensional optimization problem. However, it is not a quadratic program, as it includes the semi-infinite constraint
\begin{equation}
(C_\text{x} \otimes \tau(t))\T \eta_x + (C_\text{u} \otimes \tau(t))\T \eta_u \leq b, \forall t\in [0,\infty). \label{eq:const}
\end{equation}
As a result, \eqref{eq:mpc1} cannot be solved by a standard quadratic programming solver. Three different approaches to deal with the semi-infinite constraint are immediate:
\begin{itemize}
\item[1)] global polyhedral approximation
\item[2)] sum-of-squares approximation
\item[3)] local polyhedral approximation (active-set approach)
\end{itemize}
The first is based on a fixed polyhedral approximation, leading to a quadratic program. The second is based on exploiting polynomial basis functions for obtaining a characterization using linear matrix inequalities, whereas the third is based on an iterative constraint sampling scheme, resulting in a local polyhedral approximation. In the following subsections we will focus on approach 1 and 2. We will describe an efficient solution algorithm based on approach 3 in detail in Sec.~\ref{Sec:activeSet}.

It turns out that it is enough to check the constraint \eqref{eq:const} over a compact time interval, instead of the unbounded interval $t\in [0,\infty)$. This is because the basis functions are assumed to be linearly independent and exponentially decaying according to Assumption A1 and A2. A formal proof of this claim can be found in Sec.~\ref{App:Decay}. The compact time interval for which the constraint \eqref{eq:const} has to be checked is denoted by $[0,T_\text{c}]$ and depends on the choice of basis functions, on the order $s$, and in some cases also on the bound $b$ (see Sec.~\ref{App:Decay}).


\subsection{Global polyhedral approximation}
In order to construct a global polyhedral approximation of the set $\mathcal{C}^s_\text{U}$ we assume that an upper bound on the achievable cost $J_s$, denoted by $\bar{J}_s$, is available. In order to simplify the discussion, we assume further that for now $C_\text{x}$ and $C_\text{u}$ are row vectors, and that $b$ is a scalar. The proposed scheme can be readily extended to the case where $C_\text{x}$ and $C_\text{u}$ are matrices, and $b$ is a vector. The approximation is based on constraint sampling, where we tighten the constraint slightly to
\begin{equation}
C_\text{x} \tilde{x}(t_i) + C_\text{u} \tilde{u}(t_i) \leq (1-\epsilon) b, \label{eq:constrTight}
\end{equation}
with $\epsilon>0$, and where $t_i$, denotes the constraint sampling instances, $i=1,2,\dots, n_i$, which are to be determined. The algorithm is based on the following two steps: \begin{itemize}
\item[1)] Compute
\begin{align*}
h(t):= &\max ~C_\text{x} \tilde{x}(t) + C_\text{u} \tilde{u}(t) - b\\
&~\text{s.t.}~C_\text{x} \tilde{x}(t_i) + C_\text{u} \tilde{u}(t_i) \leq (1-\epsilon) b, ~i=1,\dots, n_i,\\
&\quad (\tilde{x},\tilde{u}, \lim_{t_\text{e}\rightarrow \infty} \tilde{x}(t_\text{e})) \in \mathcal{D}^s_\text{U},\\
&\quad ||\tilde{x}||^2 + ||\tilde{u}||^2 \leq \bar{J}_s,
\end{align*}
for all times $t\in I_s$, where $I_s$ contains a finite number of sampling instances (to be made precise below).
\item[2)] Find the local peaks of $h(t)$, denoted by $t_i^*$. Add each $t_i^*$ to the constraint sampling points  if $h(t_i^*)> -b \epsilon/2$. Repeat the procedure until $h(t)\leq -b\epsilon/2$ for all $t\in I_s$.
\end{itemize}
Note that the function $h(t)$ is again only evaluated at the discrete time points $t\in I_s$. The index set $I_s$ has to be chosen such that $h(t)\leq -b\epsilon/2$ for all $t\in I_s$ implies that $h(t) \leq 0$ for all $t\in [0,\infty)$. As remarked earlier, due to the fact that the basis functions are exponentially decaying and linearly independent it is enough to check $h(t)\leq 0$ for all $t\in [0,T_\text{c}]$, for a fixed time $T_\text{c}$, as $h(t)\leq 0$ for all $t\in (T_\text{c},\infty)$ will be fulfilled automatically. Moreover, a Lipschitz constant of 
\begin{equation}
C_\text{x}  \tilde{x}(t) + C_\text{u} \tilde{u}(t) - b \label{eq:constrTmp}
\end{equation}
can be found by using an upper bound on its time-derivative, that is, for example,
\begin{equation}
|\tau(t)\T M\T (C_\text{x} \eta_x + C_\text{u} \eta_u)|\leq |\tau(t)| |M\T (C_\text{x} \eta_x + C_\text{u} \eta_u)|,
\end{equation}
where the first term can be bounded for all $t\in [0,\infty)$ due to the fact that the basis functions are exponentially decaying and the second term can be bounded using the fact that the cost $J_s$ is below $\bar{J}_s$. 
We therefore choose the index set $I_s$ as
\begin{equation}
I_s=\{ t_k < T_\text{c} \;|\; t_k=k \frac{2L}{b\epsilon}, k=0,1,2,\dots\},
\end{equation}
where $L$ denotes a Lipschitz constant of \eqref{eq:constrTmp}.

It is important to note that the optimization in step 1 imposes the dynamics and the upper bound $\bar{J}_s$ on the cost. Both constraints tend to reduce the number of constraint sampling points $t_i$ greatly. The initial condition $x_0$ enters the optimization as an optimization variable. The optimization in step 1 represents a quadratically constrained linear program for each time instant $t\in I_s$, and as such, it can be solved using standard software packages. The whole procedure for determining the constraint sampling points is done offline. Once these time instances are found, the optimization problem that is solved online reduces to a quadratic program. The number of constraint sampling points $t_i$ is upper bounded by the cardinality of the index set $I_s$, and thus guaranteed to be finite. Due to the fact that the above procedure is greedy, it will not necessarily lead to the smallest number of constraint sampling points.


\subsection{Sum-of-squares approximation} In case exponentially decaying polynomials are used as basis functions, sum-of-squares techniques can be applied. In particular, it is shown in \cite{Nesterov2000} that the set
\begin{equation}
\{ \eta \in \mathbb{R}^{s} \;|\; \eta\T (1,t,\dots, t^{s-1}) \geq 0, \forall t\in[0,\infty) \}
\end{equation}
can be expressed using matrix inequalities that are linear in the coefficients $\eta$. In the case of exponentially decaying polynomials it is therefore enough to approximate the exponential decay by a polynomial upper bound (for example by appropriately truncating a Taylor series expansion at $0$), in order to approximate the constraint \eqref{eq:const} in a slightly conservative manner. As a result, by applying the results from \cite{Nesterov2000}, the optimization problem \eqref{eq:mpc1} is approximated by a semidefinite program that can be solved using standard optimization routines.
\section{A dedicated active-set approach}\label{Sec:activeSet}
In the following section we present an efficient optimization routine for solving \eqref{eq:approx3} (and likewise \eqref{eq:mpc1}). The method is an extension of traditional active set methods and generalizes to optimization problems with a linear quadratic cost function, linear equality constraints, and linear semi-infinite inequality constraints, i.e.
\begin{align}
\hat{J}(I_c):=&\min z\T H z \label{eq:optiprob}\\
&\text{s.t.}~A_\text{eq} z = b_\text{eq}, \\
&\quad~l_\text{b} \leq (I_{n_c} \otimes \tau(t))\T C_\text{z} z \leq l_\text{u},\quad \forall t\in I_c, \label{eq:ineq}
\end{align}
where $I_c$ is any subset of the non-negative real line. In case of \eqref{eq:mpc1}, the interval $I_c$ is taken to be $[0,\infty)$. Note that an optimization problem, whose objective function has a linear part, can be brought to the form \eqref{eq:optiprob} by completing the squares. It is assumed that the optimization problem \eqref{eq:optiprob} is feasible, that $l_\text{b} < 0$ and $l_\text{u} > 0$, and that the Hessian $H$ is positive definite, which guarantees existence and uniqueness of the corresponding minimizer\footnote{This is due to the fact that the constraints describe a closed convex set and due to the strong convexity of the objective function.}.

The method is based on the observation that if the set $I_c$ consists merely of a collection of time instants (constraint sampling instances) $t_i$, \eqref{eq:optiprob} reduces to a quadratic program that can be solved efficiently. Moreover, due to the fact that the basis functions fulfill Assumptions A1 and A2, a trajectory parametrized with the basis functions has a finite number of maxima and minima, as is shown in Sec.~\ref{App:BasisFunProp}. 
Consequently, \eqref{eq:optiprob} has only a finite number of active constraints. The collection of the time instants corresponding to these active constraints will be denoted by $I_c^*$. If this finite collection of constraint sampling instants is known ahead of time, one could simply solve \eqref{eq:optiprob} with respect to $I_c^*$ instead of $I_c$, resulting in $\hat{J}(I_c^*)=\hat{J}(I_c)$. In addition, for any subset $I_c^k$ of $I_c$ it holds that $\hat{J}(I_c)\geq \hat{J}(I_c^k)$, and likewise, if $I_c^k$ is a subset of $I_c^{k+1}$ we have that $\hat{J}(I_c^{k+1})\geq \hat{J}(I_c^k)$. Hence, a monotonically increasing sequence $J(I_c^k)$, bounded above by $\hat{J}(I_c)$ can be constructed using any sequence of sets $I_c^k$ that fulfill $I_c^k\subset I_c^{k+1} \subset \dots \subset I_c$. In particular, such sets are obtained by starting with an arbitrary initial guess $I_c^0$ containing a finite number of constraint sampling points (or even the empty set), and by adding at least one constraint violation point at each iteration. Moreover, at each iteration the inactive constraints contained in the set $I_c^k$ can be removed, as this will not alter the optimizer nor the optimal value $\hat{J}(I_c^k)$. In that way, the number of constraint sampling instances contained in $I_c^k$ remains finite. This motivates Alg.~\ref{Alg:iter1}, which solves \eqref{eq:optiprob} up to a given tolerance, by constructing an approximation to the set of active constraints $I_c^*$.

\begin{algorithm}[]
\caption{Iterative constraint sampling}
\renewcommand{\algorithmicrequire}{\textbf{Initialize:}}
\begin{algorithmic}[1]
\Require initial guess for the constraint sampling points: $I_c^0=\{t_1, t_2, \dots, t_N\}$;
maximum number of iterations: MAXITER;
constraint satisfaction tolerance: $\epsilon$;
\State $k=0$
\For{$k<$MAXITER}
\State solve \eqref{eq:optiprob} for $I_c^k$ $\rightarrow$ $z^k$, $\hat{J}(I_c^k)$
\If{infeasible}
\State abort
\ElsIf{$z^k$ fulfills \eqref{eq:ineq} for all $t\in I$ (with tol. $\epsilon$)}
\State algorithm converged
\State return $z^k$
\EndIf
\State find at least one constraint violation instant $\rightarrow$ $t_c$
\State remove inactive time instants in $I_c^k$
\State $I_c^{k+1}=I_c^k\cup \{t_c\}$,~$k=k+1$
\EndFor
\end{algorithmic}
\label{Alg:iter1}
\end{algorithm}

\begin{proposition}
Alg.~\ref{Alg:iter1} converges, that is, $\lim_{k\rightarrow \infty} \hat{J}(I_c^k)=\hat{J}(I_c^*)=\hat{J}(I_c)$. In order to achieve constraint violations smaller than $\epsilon$ at most
\begin{equation}
\frac{4 c_\tau (\hat{J}(I_c)-\hat{J}(I_c^0))}{\sigma \epsilon^2} \label{eq:bounditer}
\end{equation}
steps are required, where $c_\tau$ is defined as
\begin{equation}
c_\tau:=\sup_{t\in I_c} |\tau(t)|^2,
\end{equation} 
and $\sigma$ denotes the smallest eigenvalue of the Hessian $H$.
\end{proposition}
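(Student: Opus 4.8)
The plan is to read Alg.~\ref{Alg:iter1} as a cutting-plane scheme for the strongly convex program \eqref{eq:optiprob} and to control its progress through a per-iteration increase estimate. Two facts are already available from the discussion preceding the statement: the value $\hat{J}(I_c^k)$ is nondecreasing in $k$ and bounded above by $\hat{J}(I_c)$ (nested feasible sets), and the removal of inactive sampling instants changes neither the minimizer $z^k$ nor the value $\hat{J}(I_c^k)$. Hence the sequence $\hat{J}(I_c^k)$ converges, and the genuine content is (i) that it converges to $\hat{J}(I_c)$ and (ii) the explicit iteration bound \eqref{eq:bounditer}. Since Sec.~\ref{App:BasisFunProp} guarantees that \eqref{eq:optiprob} has only finitely many active constraints, collected in $I_c^*$, the identity $\hat{J}(I_c^*)=\hat{J}(I_c)$ holds by definition of $I_c^*$; so it suffices to show that the iterates drive the maximal constraint violation to zero and, simultaneously, to count the steps whose violation exceeds $\epsilon$.

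First I would establish the per-step increase. Let $F_k$ denote the convex feasible set of the subproblem solved in iteration $k$ after the removal step, so that $z^k=\argmin_{z\in F_k} z\T H z$, and note that the next iteration only appends the cut at $t_c$. Writing $f(z):=z\T H z$, the first-order optimality condition on the convex set $F_k$ gives $\nabla f(z^k)\T(z-z^k)\geq 0$ for every $z\in F_k$; since $f$ is an exact quadratic with $\nabla^2 f = 2H \succeq 2\sigma I$, evaluating at $z=z^{k+1}$ (which lies in the appended feasible set, a subset of $F_k$) yields
\begin{equation}
\hat{J}(I_c^{k+1}) - \hat{J}(I_c^k) = f(z^{k+1}) - f(z^k) \geq \sigma\,|z^{k+1}-z^k|^2. \nonumber
\end{equation}
This is the main mechanism: each added cut forces a measurable jump in the optimal value proportional to the squared step length.

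Next I would convert the violation at $t_c$ into a lower bound on $|z^{k+1}-z^k|$. If the selected instant $t_c$ has a violation exceeding $\epsilon$, say $(e_j\T\otimes\tau(t_c)\T)C_\text{z} z^k > l_\text{u} + \epsilon$ for some component $j$ (the lower-bound case $<l_\text{b}-\epsilon$ being symmetric), while $z^{k+1}$ respects the new cut, $(e_j\T\otimes\tau(t_c)\T)C_\text{z} z^{k+1}\leq l_\text{u}$, then subtracting and applying the Cauchy--Schwarz inequality gives $\epsilon < (e_j\T\otimes\tau(t_c)\T)C_\text{z}(z^k-z^{k+1}) \leq |C_\text{z}\T(e_j\otimes\tau(t_c))|\,|z^{k+1}-z^k|$. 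Bounding the row norm by $\sup_{t\in I_c}|\tau(t)|^2 = c_\tau$ (the norm of $C_\text{z}$ being absorbed into the constant) yields $|z^{k+1}-z^k|^2 \geq \epsilon^2/(4 c_\tau)$, so every non-terminal iteration satisfies $\hat{J}(I_c^{k+1}) - \hat{J}(I_c^k)\geq \sigma\epsilon^2/(4 c_\tau)$. Telescoping these increments and using $\hat{J}(I_c^k)\leq \hat{J}(I_c)$ bounds the number of such iterations by \eqref{eq:bounditer}. For the limit statement I would rerun the same argument with the actual maximal violation $v_k$ in place of $\epsilon$: summability of the increments forces $\sum_k v_k^2 <\infty$, hence $v_k\to 0$; since the sublevel sets of $f$ restricted to the affine equality set are compact, a limit point $\bar z$ of the bounded iterates is feasible for all of $I_c$ and attains the limiting value, which therefore equals $\hat{J}(I_c)$.

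I expect the main obstacle to be the bookkeeping around the removal step: one must verify that deleting inactive cuts and then appending $t_c$ preserves the inclusion used in the increase estimate. This holds because removing an inactive constraint enlarges the feasible set to some $\tilde F_k$ over which $z^k$ is still the unique minimizer (the deleted constraints are slack at $z^k$), while $z^{k+1}$ lies in the intersection of $\tilde F_k$ with the new cut and hence in $\tilde F_k$; the increase inequality then holds verbatim with $F_k$ replaced by $\tilde F_k$, and the removal leaves $\hat{J}$ unchanged. The remaining care is purely in tracking the multiplicative constant through the Cauchy--Schwarz step so as to recover the factor $4 c_\tau$ appearing in \eqref{eq:bounditer}.
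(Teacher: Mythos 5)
Your proposal is correct and follows essentially the same route as the paper's proof: monotonicity of $\hat{J}(I_c^k)$, a strong-convexity estimate relating the value increase $\hat{J}(I_c^{k+1})-\hat{J}(I_c^k)$ to $|z^{k+1}-z^k|^2$, a Cauchy--Schwarz bound tying the violation at the added instant to that step length, and telescoping to obtain \eqref{eq:bounditer}. The differences are presentational rather than structural: you derive the strong-convexity inequality explicitly from first-order optimality (obtaining the sharper constant $\sigma$ in place of the paper's $\sigma/4$, then deliberately weakening it to recover the factor $4$), and you treat the inactive-constraint removal and the feasibility of the limit point more explicitly, where the paper compresses these into a remark and an unspecified ``contradiction argument.''
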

\begin{proof}
From the above arguments it can be concluded that $\hat{J}(I_c^k)$ is monotonically increasing whenever $\hat{J}(I_c^k)<\hat{J}(I_c)$ and upper bounded by $\hat{J}(I_c)$. Therefore the sequence $\hat{J}(I_c^k)$ converges.
It remains to show that $\hat{J}(I_c^k)$ converges to $\hat{J}(I_c)$.
The strong convexity of the objective function can be used to establish
\begin{equation}
|z^{k+1}-z^k|^2 \leq 4 \sigma^{-1} (\hat{J}(I_c^{k+1})-\hat{J}(I_c^k)),
\end{equation}
where $z^{k+1}$ and $z^k$ are the minimizer corresponding to $\hat{J}(I_c^k)$ and $\hat{J}(I_c^{k+1})$. As a result we can conclude that $z^k$ converges, and that $\lim_{k\rightarrow \infty} z^k$ is well-defined and satisfies the constraint \eqref{eq:ineq} (by a contradiction argument). It is therefore a feasible candidate for \eqref{eq:optiprob}, implying that $\lim_{k\rightarrow \infty} \hat{J}(I_c^{k}) \geq \hat{J}(I_c)$, which, combined with $\hat{J}(I_c^{k})\leq \hat{J}(I_c)$ for all integers $k$, leads to $\lim_{k\rightarrow\infty} \hat{J}(I_c^k) = \hat{J}(I_c)$. 

It remains to show that \eqref{eq:bounditer} is fulfilled. To that extent, let $\epsilon>0$ denote the smallest constraint violation that occurs within the first $N$ steps. For all $k\leq N-1$, there exists the constraint violation point $t_j^k$, which will be added to $I_c^k$, and therefore
\begin{equation}
\epsilon \leq |(e(t_j^k) \otimes \tau(t_j^k))\T (z^k-z^{k+1})|,
\end{equation}
where $e(t_j^k)$ is a canonical unit vector. Combining the Cauchy-Schwarz inequality and the above bound on $|z^{k+1}-z^k|$ results in
\begin{equation}
\epsilon^2 \leq 4 c_\tau \sigma^{-1} (\hat{J}(I_c^{k+1})-\hat{J}(I_c^k)).
\end{equation}
By summing over the first $N$ steps we arrive at
\begin{equation}
N\epsilon^2 \leq 4 \sigma^{-1} c_\tau (\hat{J}(I_c^N)-\hat{J}(I_c^0)) \leq 4 \sigma^{-1} c_\tau (\hat{J}(I_c^*)-\hat{J}(I_c^0)),
\end{equation}
since the sequence $\hat{J}(I_c^k)$ is strictly increasing and bounded above by $\hat{J}(I_c^*)$. Dividing by $\epsilon^2$ on both sides concludes the proof. \hfill \qed
\end{proof}

\subsection{Implementation details}
Alg.~\ref{Alg:iter1} can be naturally embedded in an active-set method. An introduction to active-set methods for solving quadratic programs can be found for example in \cite[Ch.~10]{Fletcher}. Starting from an initial guess of the active constraint sampling instants, which is denoted by $I_c^0$, the quadratic program with optimal cost $\hat{J}(I_c^0)$ is solved: This is done by initially assuming that all constraints in the set $I_c^0$ are active. The resulting optimization problem reduces to an equality constrained quadratic program, whose solution can be calculated by solving a linear system of equations. The Lagrange multipliers corresponding to \eqref{eq:ineq}, which are denoted by $\mu(t)$, $t\in I_c^0$, are evaluated subsequently. If all the constraints are indeed active, then the optimizer to the quadratic program with cost $\hat{J}(I_c^0)$ has been found. If, however, not all the constraints are found to be active, that is, if there are some Lagrange multipliers that are zero, the standard active-set procedure, see \cite[Ch.~10]{Fletcher} is used to find the subset of active constraints $I_a \subset I_c^0$. Provided that the active set $I_a \subset I_c^0$ and the optimizer to the quadratic program with cost $\hat{J}(I_c^0)$ has been found, the constraint \eqref{eq:ineq} is then checked for all $t\in I_c$. If no constraint violations occur, the solution to \eqref{eq:optiprob} has been found. If constraint violations occur, the time instant $t_c$ for which a violation occurs, is added to the set of active constraints $I_a$ resulting in $I_c^1=I_a \cup \{t_c\}$. The above procedure is then repeated until convergence. 

Each iteration requires solving equality constrained quadratic programs of the type 
\begin{align}
&\min z\T H z \label{eq:optequal}\\
&\text{s.t.}~A_\text{eq} z = b_\text{eq}, \quad (c(t) \otimes \tau(t))\T C_\text{z} z = l_\text{a}(t),\quad \forall t \in I_a,
\end{align}
where $c(t) \in \mathbb{R}^{n_c}$, $l_\text{a}(t)\in \mathbb{R}$, 
$t\in I_a$, and $I_a \subset I_c^k$ describe the active constraints corresponding to \eqref{eq:ineq}. Due to the fact that very few constraints are expected to be active, we use a range space approach, \cite[p.~238]{Fletcher}. To that extent, the equality constraint is eliminated and the optimizer $z^*$ corresponding to \eqref{eq:optequal} is rewritten as
\begin{equation}
z^*=\hat{b}+\hat{H} C_\text{z}\T \sum_{t\in I_a} (c(t)\otimes \tau(t))\T \mu(t),\label{eq:zstar}
\end{equation}
where the dual variable $\mu(t)\in \mathbb{R}$, defined for $t\in I_c^k$, satisfies
\begin{equation}
(c(t_j) \otimes \tau(t_j))\T C_\text{z} \hat{H} C_\text{z}\T \sum_{t\in I_a} (c(t) \otimes \tau(t)) \mu(t)
= l_\text{a}(t_j) - (c(t_j)\otimes \tau(t_j))\T C_\text{z} \hat{b}, \label{eq:tmpmu}
\end{equation}
for all $t_j \in I_a \subset I_c^k$, and $\mu(t)=0$ for all $t\in I_c^{k}\setminus I_a$, with 
\begin{align}
\hat{H}&:= H^{-1} A_\text{eq}\T (A_\text{eq} H^{-1} A_\text{eq}\T)^{-1} A_\text{eq} H^{-1} - H^{-1},\\
\hat{b}&:= H^{-1} A_\text{eq}\T (A_\text{eq} H^{-1} A_\text{eq}\T)^{-1} b_\text{eq}.
\end{align}
The dual variable $\mu(t)$ is therefore obtained by solving \eqref{eq:tmpmu}, and the optimizer $z^*$ is then determined via \eqref{eq:zstar}. At each iteration, a single constraint is either added or removed. Therefore, in order to efficiently find a solution to \eqref{eq:tmpmu}, a $LDL\T$-decomposition of the matrix
\begin{equation}
\{(c(t_j) \otimes \tau(t_j))\T C_\text{z} \hat{H} C_\text{z}\T (c(t_i) \otimes \tau(t_i)) \}_{(t_j, t_i) \in I_a \times I_a} \label{eq:tmp}
\end{equation}
is computed and adapted at each step by performing rank-1 updates.
The matrix $\hat{H}$ and the vector $\hat{b}$ are precomputed. 
For additional details regarding the regularity of \eqref{eq:tmp}, and issues related to cycling and stalling we refer to \cite[Ch.~10]{Fletcher} and \cite[p.~467]{Nocedal}. 

\subsection{Constraint check}
It remains to explain how to efficiently check whether the constraint \eqref{eq:ineq} is fulfilled for a given solution candidate $z$. We assume that the interval $I_c$ has the form $I_c=[0,T_c]$. As it has been explained in Sec.~\ref{Sec:SemiinfiniteConst}, the constraint check over the interval $[0,\infty)$ reduces to the check over a compact interval, provided that the basis functions fulfill Assumptions A1 and A2. 

A straightforward approach would be to exploit the specific structure of the basis functions. For example, if the basis functions consist of exponentially decaying polynomials having a degree of at most 4, determining the stationary points of \eqref{eq:ineq} amounts to solving a quartic equation, which can be done analytically. As a result, it would be enough to check the constraints at these stationary points in order to determine if the constraint is satisfied or not.


We propose a more general approach that is based on local Taylor approximations, and thus valid for arbitrary basis functions compatible with Assumptions A1 and A2. In order to simplify the discussion, we consider the special case of \eqref{eq:ineq}, where $C_\text{z}$ is the identity and $n_c=1$. The resulting algorithm extends naturally to the more general case. According to Taylor's theorem we obtain the following identity
\begin{equation}
\tau(t)\T z=\tau(0)\T z + \dot{\tau}(0)\T z t + \ddot{\tau}(0)\T z \frac{t^2}{2} + \tau^{(3)}(\bar{t})\T z \frac{t^3}{6},
\end{equation}
where $\bar{t}\in [0,t]$. As will become clear in the following, a third-order Taylor expansion represents a good compromise between approximation quality and computational effort. The last term of the previous equation can be bounded by the Cauchy-Schwarz inequality leading to 
\begin{equation}
|\tau^{(3)}(s)\T z| \leq \sup_{\bar{t} \in I_c} |\tau(\bar{t})|  |(M\T)^3 z| =: R(z), \quad \forall s \in [0,t].
\end{equation}
As a consequence, the following upper and lower bounds are obtained
\begin{equation}
b_\text{l}(t) \leq \tau(t)\T z \leq b_\text{u}(t),
\end{equation}
for all $t\in I_c$, with
\begin{align}
b_\text{l}(t)&:=\tau(0)\T z + \dot{\tau}(0)\T z t + \ddot{\tau}(0)\T z \frac{t^2}{2} - R(z)\frac{t^3}{6}, \\
b_\text{u}(t)&:=\tau(0)\T z + \dot{\tau}(0)\T z t + \ddot{\tau}(0)\T z \frac{t^2}{2} + R(z)\frac{t^3}{6}.
\end{align}
The situation is exemplarily depicted in Fig.~\ref{Fig:illustrationBound} (right). Given that $\dot{\tau}(0)\T z \geq 0$ the lower bound attains its maximum at time 
\begin{equation}
t_\text{u}:=\frac{ \ddot{\tau}(0)\T z + \sqrt{ (\ddot{\tau}(0)\T z)^2 +2 R(z) \dot{\tau}(0)\T z}}{R(z)} > 0,
\end{equation}
whereas if $\dot{\tau}(0)\T z < 0$ the upper bound attains its minimum at time 
\begin{equation}
t_\text{l}:=\frac{- \ddot{\tau}(0)\T z + \sqrt{ (\ddot{\tau}(0)\T z)^2 -2 R(z) \dot{\tau}(0)\T z}}{R(z)} > 0.
\end{equation}
Thus, if the lower bound exceeds $l_\text{u}$ or the upper bound drops below $l_\text{l}$, that is, if
\begin{align}
b_\text{l}(t_\text{u})> l_\text{u}, \quad \text{or}\quad b_\text{u}(t_\text{l}) < l_\text{l},  \label{eq:approxCheck}
\end{align}
the constraint \eqref{eq:ineq} is guaranteed to be violated at time $t_\text{u}$, respectively at time $t_\text{l}$.
If this is not the case, we are guaranteed that the constraint \eqref{eq:ineq} is satisfied in the interval $[0,t_\text{s}]$, with
\begin{align}
t_\text{s}:=\min\{ t_1, t_2 \;|\; b_\text{u}(t_1)=l_\text{u}, b_\text{l}(t_2)=l_\text{l} \}.
\end{align}
Thus, finding the value $t_\text{s}$ requires the solution of two cubic equations, which stems from the fact that a third order Taylor approximation was used as a starting point.
By shifting the parameter vector in time by $t_\text{s}$ and repeating the above procedure, the constraint is either found to be satisfied for all $t\in I_c$ or a constraint violation is detected. Shifting the parameter vector by $t_\text{s}$ amounts in multiplying $z$ with the matrix exponential
\begin{equation}
e^{M\T t_\text{s}} z \rightarrow z. \label{eq:shift}
\end{equation}
The procedure is illustrated by the flow chart depicted in Fig.~\ref{Fig:illustrationBound} (left). 

The efficiency of the proposed strategy can be improved via the following observation. A constraint violation occurring close to $t=0$, is often found within the first few iterations. However, if no constraint violation occurs, the whole interval $I_c$ needs to be traversed, which tends to increase computation. The computational effort may be reduced by including additional conservative constraint satisfaction checks. For example, upper and lower bounds can be tightened by a factor $\gamma\in (0,1)$ such that the satisfaction of
\begin{equation}
\gamma l_\text{l} \leq \tau(t_i)\T z \leq \gamma l_\text{u} \label{eq:qc} 
\end{equation}
for certain time instances $t_i$ implies \eqref{eq:ineq} (for all $t\in I_c$). As a result, at each iteration, the above inequality is checked. If it is found to be fulfilled, then constraint satisfaction can be guaranteed.

\begin{figure}
\begin{minipage}[l]{.45\textwidth}
\center
\scalebox{0.8}{
\tikzstyle{decision} = [diamond, draw, 
    text width=5.5em, text badly centered, node distance=3cm, inner sep=0pt]
\tikzstyle{block} = [rectangle, draw,
    text width=12em, text centered, rounded corners, minimum height=4em]
\tikzstyle{smallblock} = [rectangle, draw,  
    text width=6em, text centered, rounded corners, minimum height=4em]
\tikzstyle{line} = [draw, -latex']
\tikzstyle{cloud} = [draw, ellipse,fill=red!20, node distance=3cm, minimum height=2em]
\tikzstyle{emptyblock} = [text width=6em, text centered]
\def\distv{7em}
\def\disth{10em}

\begin{tikzpicture}[node distance = 2cm, auto]
\node [decision] (eval1) {check \eqref{eq:qc}};
\node [emptyblock,above of=eval1,node distance=\distv] (start) {$I_c=[0,T_c]$, $z$};
\node [emptyblock,left of=eval1,node distance=9em] (satis) {constraint \\ satisfaction};
\node [decision,below of=eval1,node distance=10em] (eval2) {check \eqref{eq:approxCheck}};
\node [emptyblock,left of=eval2,node distance=9em] (viol) { constraint \\ violation};
\node [decision,below of=eval2,node distance=10em] (eval3) {$t+\text{t}_s > T_c$};
\node [smallblock,right of=eval3, node distance=9em] (shift) {$e^{M\T t_\text{s}} z \rightarrow z$ \\
$t+t_s \rightarrow t$};
\node [emptyblock,left of=eval3, node distance=9em] (satis2) { constraint \\ satisfaction};
\path [line] (start) -- node {$0 \rightarrow t$} (eval1);
\path [line] (eval1) -- node[above] {true} (satis);
\path [line] (eval1) -- node[left] {compute $t_\text{u}$, $t_\text{l}$} (eval2);
\path [line] (eval2) -- node[left] {compute $t_\text{s}$} (eval3);
\path [line] (eval2) -- node[above] {true} (viol);
\path [line] (eval3) -- node[above] {true} (satis2);
\path [line] (eval3) -- (shift);
\path [line] (shift) |- (eval1);
\end{tikzpicture}
}
\end{minipage}\hfill
\begin{minipage}[r]{.45\textwidth}
\center
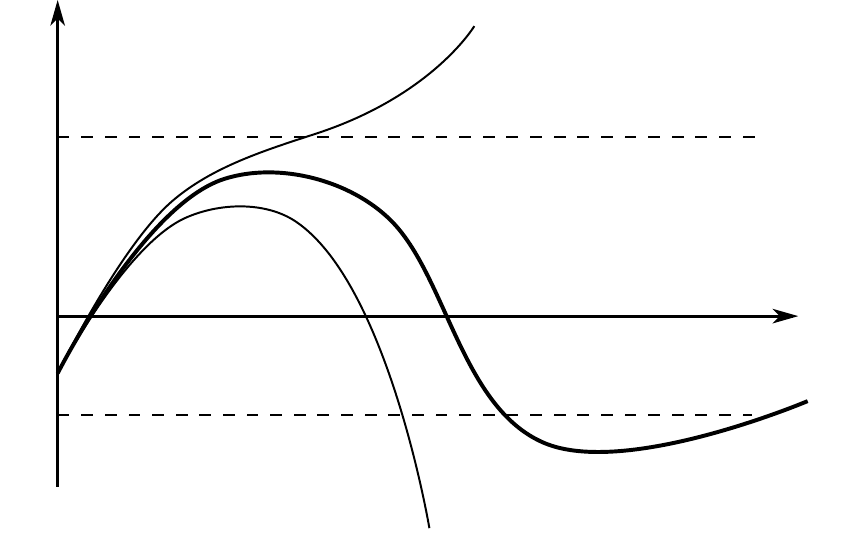
\end{minipage}
\caption{Left: Flow chart that illustrates the proposed constraint satisfaction check. Right: Illustration of the upper and lower bounds $b_\text{u}(t)$ and $b_\text{l}(t)$ obtained from the Taylor expansion of $\tau(t)\T z$. In that case the test \eqref{eq:approxCheck} is indecisive, and constraint satisfaction over the interval $[0,t_\text{s}]$ can be guaranteed.}
\label{Fig:illustrationBound}
\end{figure}

\section{Properties B1-B5}\label{App:PropB}
The sets $\mathcal{C}, \mathcal{C}^s_\text{U}$, and $\mathcal{C}^s_\text{L}$ have the following properties
\begin{itemize}
\item[B1)] the sets $\mathcal{C}^s_\text{U}$, $\mathcal{C}^s_\text{L}$, and $\mathcal{C}$ are closed and convex.
\item[B2)] the sets $\pi^s(\mathcal{C}^s_\text{U})$, $\pi^s(\mathcal{C}^s_\text{L})$ are closed and convex.
\item[B3)] $\mathcal{C}^s_\text{U} \subset \mathcal{C}^{s+1}_\text{U} \subset \mathcal{C}$.
\item[B4)] $\mathcal{C}^s_\text{L} \supset \mathcal{C}^{s+1}_\text{L} \supset \mathcal{C}$.
\item[B5)] $(\pi^s)^* \pi^s( \mathcal{C}^s_\text{U} ) \subset \mathcal{C}^s_\text{U}$, $(\pi^s)^* \pi^s( \mathcal{C}^s_\text{L}) \subset \mathcal{C}^s_\text{L}$.
\end{itemize}
We will sketch the proofs of Properties B1-B5 in the following. 
It follows from the linearity of the stage constraint, and the fact that the terminal constraint $\mathcal{X}$ is convex, that the sets $\mathcal{C}$, $\mathcal{C}^s_\text{U}$, and $\mathcal{C}^s_\text{L}$ are likewise convex.
We will sketch the proof that the set $\mathcal{C}^s_\text{L}$ is closed. The argument can be translated to the set $\mathcal{C}$ by using the following variational formulation
\begin{multline}
\mathcal{C}=\{ (x,u,x_T) \in X \;|\;
\int_{I} \delta p\T (-C_\text{x} x -C_\text{u} u + b) \dt \geq 0 \\
\forall \delta p \in L^2_{n_c}:  \delta p \geq 0, 
\Big|\int_{I} \delta p\T b\dt \Big| < \infty; x_T \in \mathcal{X} \}.\label{eq:mathcalX2}
\end{multline}
We will argue indirectly, i.e. that the complement of $\mathcal{C}^s_\text{L}$ is open. To that extent we choose $(x,u,x_T) \in X \setminus \mathcal{C}^s_\text{L}$. As a result, there exists a test function $\delta \tilde{p}$, with $\delta \tilde{p} \geq 0$, which is spanned by the first $s$ basis functions and is such that
\begin{equation}
\int_{I} \delta \tilde{p}\T(-C_\text{x} x - C_\text{u} u + b)\dt <0.
\end{equation}
For any $\hat{x}\in L^2_n$, $\hat{u}\in L^2_m$, and $\hat{x}_T \in \mathbb{R}^n$, with $||\hat{x}-x|| <\epsilon$, $||\hat{u}-u|| < \epsilon$, and $|\hat{x}_T - x_T| < \epsilon$, it follows that
\begin{align}
\int_{I} \delta \tilde{p}\T (-C_\text{x} \hat{x} - C_\text{u} \hat{u} + b) \dt = &\int_{I} \delta \tilde{p}\T (-C_\text{x} x - C_\text{u} u + b) \dt +\int_{I} \delta{\tilde{p}}\T (C_\text{x} (x-\hat{x}) + C_\text{u} (u-\hat{u})) \dt,
\end{align}
where the last integral can be bounded by (using the Cauchy-Schwarz inequality)
\begin{equation}
||\delta \tilde{p}||(|C_\text{x}| ||x-\hat{x}|| + |C_\text{u}| ||u-\hat{u}||) < ||\delta \tilde{p}|| (|C_\text{x}| + |C_\text{u}|) \epsilon.
\end{equation}
We can infer from $\mathcal{X}$ being closed, that there exists an open ball centered at $x_T$, which does not intersect $\mathcal{X}$. As a result, by choosing $\epsilon$ small enough, it can be concluded that
\begin{equation}
\int_{I} \delta \tilde{p}\T(-C_\text{x} \hat{x} - C_\text{u} \hat{u} + b)\dt<0, \quad \hat{x}_T \not\in \mathcal{X},
\end{equation}
and therefore $(\hat{x},\hat{u},x_T)\in X \setminus \mathcal{C}^s_\text{L}$, for all $\hat{x} \in L^2_n$, $\hat{u} \in L^2_m$, $\hat{x}_T \in \mathbb{R}^n$ with $||x-\hat{x}|| < \epsilon$, $||u-\hat{u}|| < \epsilon$, and $|\hat{x}_T-x_T|<\epsilon$. Hence, the complement of $\mathcal{C}^s_\text{L}$ is open, and therefore $\mathcal{C}^s_\text{L}$ is closed.

Note that the set $\mathcal{C}^s_\text{U}$ is given by the intersection of the set $\mathcal{C}$ with the linear subspace $X^s$ spanned by the first $s$ basis functions. Both of these sets are closed\footnote{$X^s$ is finite-dimensional, thus complete, and hence also closed.} implying that $\mathcal{C}^s_\text{U}$ is closed as well.

The projection $\pi^s$ is linear, which asserts the convexity of the sets $\pi^s(\mathcal{C}^s_\text{U})$ and $\pi^s(\mathcal{C}^s_\text{L})$. Moreover, it is surjective, and hence, by the open mapping theorem, it follows directly from $\mathcal{C}^s_\text{L}$ and $\mathcal{C}^s_\text{U}$ being closed that $\pi^s(\mathcal{C}^s_\text{L})$ and $\pi^s(\mathcal{C}^s_\text{U})$ are closed as well.

The inclusion $\mathcal{C}^s_\text{U} \subset \mathcal{C}^{s+1}_\text{U}$ follows directly from $\mathcal{C}^s_\text{U}=\mathcal{C} \cap X^s$ and the inclusion $X^s \subset X^{s+1}$. In other words, given $(x,u,x_T) \in \mathcal{C}^s_\text{U}$, the parameter vectors $\eta_x$ and $\eta_u$, corresponding to the state and input trajectories $x$ and $u$, can be extended with zeros resulting in trajectories $\hat{x}$, $\hat{u}$ spanned by $s+1$ basis functions. But $\hat{x}=x$ and $\hat{u}=u$ and therefore $(x,u,x_T)\in \mathcal{C}^{s+1}_\text{U}$. The inclusion $\mathcal{C}^s_\text{L} \supset \mathcal{C}^{s+1}_\text{L}$ follows from the fact that the dimension of the subspace to which the test functions $\delta \tilde{p}$ are constrained increases with $s$.

The claim $(\pi^s)^* \pi^s(\mathcal{C}^s_\text{U}) \subset \mathcal{C}^s_\text{U}$ follows from the fact that $(\pi^s)^* \pi^s$ is a projection from $X$ onto $X^s \subset X$ and that $\mathcal{C}^s_\text{U} \subset X^s$. The claim that $(\pi^s)^* \pi^s(\mathcal{C}^s_\text{L}) \subset \mathcal{C}^s_\text{L}$ follows from the linearity of the stage constraints. More precisely, it follows by noting that for any $\delta \tilde{p}=(I_{n_c} \otimes \tau)\T \delta \eta_p$ and any $x\in L^2_n$,
\begin{align}
\int_{I} \delta \tilde{p}\T C_\text{x} x\dt &= \delta \eta_p\T  C_\text{x} \pi^{ns}(x) \nonumber \\
&= \int_{I} \delta \tilde{p}\T C_\text{x} ~(\pi^{ns})^* \pi^{ns}(x) \dt
\end{align}
holds.

\section{Properties C1-C5}\label{App:PropC}
The sets $\mathcal{D}$, $\mathcal{D}^s_\text{U}$, and $\mathcal{D}^s_\text{L}$ have the following properties
\begin{itemize}
\item[C1)] the sets $\mathcal{D}^s_\text{U}$, $\mathcal{D}^s_\text{L}$, and $\mathcal{D}$ are closed and convex.
\item[C2)] the sets $\pi^s(\mathcal{D}^s_\text{U})$ and $\pi^s(\mathcal{D}^s_\text{L})$ are closed and convex.
\item[C3)] $\mathcal{D}^s_\text{U} \subset \mathcal{D}^{s+1}_\text{U} \subset \mathcal{D}$.
\item[C4)] $\mathcal{D}^s_\text{L} \supset \mathcal{D}^{s+1}_\text{L} \supset \mathcal{D}$.
\item[C5)] $(\pi^s)^* \pi^s (\mathcal{D}^s_\text{U}) \subset \mathcal{D}^s_\text{U}$, $(\pi^s)^* \pi^s(\mathcal{D}^s_\text{L}) \subset \mathcal{D}^s_\text{L}$.
\end{itemize}

We will sketch the proof of Properties C1-C5 below. 

The convexity of the sets $\mathcal{D}^s_\text{U}$, $\mathcal{D}^s_\text{L}$, and $\mathcal{D}$ follows directly from the linearity of the dynamics.

The fact that the set $\mathcal{D}^s_\text{L}$ is closed can be seen by a similar argument used for showing closedness of $\tilde{\mathcal{C}}^s$ in Sec.~\ref{App:PropB}, that is, showing that $X\setminus \mathcal{D}^s_\text{L}$ is open. The set $\mathcal{D}$ can be characterized using the variational equality, 
\begin{align}
-\int_{I} \delta \dot{p}\T x \dt - \int_{I}& \delta p\T(A x + B u) \dt 
- \delta p(0)\T x_0 +\delta p(T)\T x_T=0, \quad \forall \delta p \in H_n, \label{eq:weak}
\end{align}
where $H_n$ denotes the set of functions in $L^2_n$ having a weak derivative in $L^2_n$. Thus, again a similar argument can be applied to show that $\mathcal{D}$ is closed. It follows that $\mathcal{D}^s_\text{U}$ is closed, since $\mathcal{D}^s_\text{U}$ is defined as the intersection of $\mathcal{D}$ with the closed set $X^s$.

The linearity and surjectivity of the map $\pi^s$ implies that $\pi^s(\mathcal{D}^s_\text{U})$ and $\pi^{s}(\mathcal{D}^s_\text{L})$ are indeed closed (by the open mapping theorem) and convex. 

The inclusion $\mathcal{D}^s_\text{U} \subset \mathcal{D}^{s+1}_\text{U} \subset \mathcal{D}$ for all $s$ follows directly from the fact that $X^s \subset X^{s+1} \subset X$ and $\mathcal{D}^s_\text{U}=\mathcal{D} \cap X^s$. The inclusion $\mathcal{D}^{s+1}_\text{L} \subset \mathcal{D}^s_\text{L}$ for all $s$ can be seen by noting that the variational equality in \eqref{eq:deftD} has to hold for variations spanned by more and more basis functions as $s$ increases. The claim that $\mathcal{D}$ is contained in $\mathcal{D}^s_\text{L}$ follows from the equivalence of \eqref{eq:weak} with the formulation in \eqref{eq:defD}. 
The properties $(\pi^s)^* \pi^s(\mathcal{D}^s_\text{U}) \subset \mathcal{D}^s_\text{U}$ and $(\pi^s)^* \pi^s(\mathcal{D}^s_\text{L}) \subset \mathcal{D}^s_\text{L}$ can be shown using the same arguments as in \ref{App:PropB}, where the latter relies on the linearity of the dynamics.
\section{Convergence results}\label{App:SetConv}
We prove the following result for the case where $I$ has infinite measure.

\begin{lemma}
Given that the basis functions form an algebra and that the basis functions are dense in the set of smooth functions with compact support in $I$, it holds that
\begin{equation*}
\lim_{s \rightarrow \infty} \mathcal{C}^s_\text{U} = \lim_{s \rightarrow \infty} \mathcal{C}^s_\text{L}.
\end{equation*}
\end{lemma}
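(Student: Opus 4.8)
The plan is to show that both monotone families converge to $\mathcal{C}$ itself, from which the claimed equality is immediate. First I would record that, since Property B3) gives $\mathcal{C}^s_\text{U}\subset\mathcal{C}^{s+1}_\text{U}$ and Property B4) gives $\mathcal{C}^s_\text{L}\supset\mathcal{C}^{s+1}_\text{L}$, the two set limits exist by monotonicity and are given by
\begin{equation*}
\lim_{s\rightarrow\infty}\mathcal{C}^s_\text{U}=\overline{\textstyle\bigcup_s \mathcal{C}^s_\text{U}},\qquad \lim_{s\rightarrow\infty}\mathcal{C}^s_\text{L}=\textstyle\bigcap_s \mathcal{C}^s_\text{L}.
\end{equation*}
Combining Property B3), Property B4), and the closedness of $\mathcal{C}$ (Property B1)) yields the sandwich $\lim_s\mathcal{C}^s_\text{U}\subset\mathcal{C}\subset\lim_s\mathcal{C}^s_\text{L}$, so it suffices to prove the two reverse inclusions $\mathcal{C}\subset\lim_s\mathcal{C}^s_\text{U}$ and $\lim_s\mathcal{C}^s_\text{L}\subset\mathcal{C}$.

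For the inclusion $\lim_s\mathcal{C}^s_\text{L}\subset\mathcal{C}$ I would use the variational description of $\mathcal{C}$ in \eqref{eq:mathcalX2}. Given $(x,u,x_T)\in\bigcap_s\mathcal{C}^s_\text{L}$, the terminal condition $x_T\in\mathcal{X}$ holds directly, so the task is to deduce $C_\text{x}x+C_\text{u}u\leq b$ almost everywhere from the fact that $\int_I \delta\tilde{p}\T(-C_\text{x}x-C_\text{u}u+b)\dt\geq0$ for every nonnegative test function $\delta\tilde{p}$ spanned by finitely many basis functions (cf. \eqref{eq:deftX}). Here both hypotheses enter: because the basis functions form an algebra, the square $h^2$ of any basis-function combination $h$ is again such a combination and is nonnegative, hence admissible as a test function; and because the basis functions are dense in the smooth compactly supported functions, I would argue by contradiction from a set of positive measure on which some component of $-C_\text{x}x-C_\text{u}u+b$ is strictly negative, constructing an admissible test function of the form $h^2$ that renders the integral strictly negative and contradicts membership in $\mathcal{C}^s_\text{L}$. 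This forces the pointwise inequality and hence $(x,u,x_T)\in\mathcal{C}$.

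For the inclusion $\mathcal{C}\subset\lim_s\mathcal{C}^s_\text{U}$ I would approximate a feasible triple in two stages. Given $(x,u,x_T)\in\mathcal{C}$, I would first replace $(x,u)$ by a strictly feasible, smooth, compactly supported pair $(\hat{x},\hat{u})$ that is $L^2$-close to $(x,u)$ — obtained by interpolating towards a strictly feasible reference point (the origin is strictly feasible whenever $b>0$ and $0\in\mathcal{X}$, as in the MPC setting), truncating the tails (permissible since $I$ has infinite measure and $x,u\in L^2_n,L^2_m$ decay), and mollifying — while keeping the same $x_T\in\mathcal{X}$. Using density of the basis functions in the smooth compactly supported functions, I would then approximate $(\hat{x},\hat{u})$ by basis-function combinations converging uniformly; the strict feasibility margin guarantees that for all sufficiently large $s$ these combinations still satisfy $C_\text{x}\,\cdot+C_\text{u}\,\cdot\leq b$, so they lie in $\mathcal{C}\cap X^s=\mathcal{C}^s_\text{U}$. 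Since $x_T$ is unconstrained within $X^s$, the approximants lie in $\mathcal{C}^s_\text{U}$ and show $(x,u,x_T)\in\overline{\bigcup_s\mathcal{C}^s_\text{U}}$.

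The main obstacle is the first reverse inclusion, namely converting variational inequalities that are tested only against nonnegative combinations of basis functions into the pointwise constraint. The delicate point is that density of a linear span does not transfer to density of its nonnegative cone, which is precisely why the algebra hypothesis is indispensable: it supplies nonnegative test functions as squares. A secondary difficulty, specific to the infinite-measure interval, is controlling the tail contributions of these merely exponentially decaying (rather than compactly supported) test functions against the constant term $b$; I would handle this by choosing the approximating combinations to converge in a norm that dominates the pairing with $b$, exploiting the uniform exponential decay guaranteed by Assumptions A1 and A2.
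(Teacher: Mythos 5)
Your overall plan — sandwiching through $\mathcal{C}$ via Properties B1, B3, B4 and then proving the two reverse inclusions $\lim_{s\rightarrow\infty}\mathcal{C}^s_\text{L}\subset\mathcal{C}$ and $\mathcal{C}\subset\lim_{s\rightarrow\infty}\mathcal{C}^s_\text{U}$ — is a legitimate (and more explicit) decomposition than the paper's, and your first inclusion uses the same mechanism as the paper's proof: nonnegative test functions obtained as squares via the algebra property, density, and a contradiction. However, there is a genuine gap precisely at the point you flag as a "secondary difficulty". The fix you propose, "choosing the approximating combinations to converge in a norm that dominates the pairing with $b$", is not available: the density hypothesis furnishes only uniform convergence, and on an interval of infinite measure no uniform smallness of $\delta\tilde{p}_i-\delta p$ controls $\int_I(\delta\tilde{p}_i-\delta p)\,b_k\dt$, since the constant $b_k$ is not integrable over $I$. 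The paper's proof supplies the concrete missing device: it selects a basis function $\tau_p$ that does not vanish at the relevant point (existence is proved by a uniqueness argument for $\dot{\tau}^p=M_p\tau^p$), and uses $\tau_p^2\,\delta\tilde{p}_i$ as the test function. By the algebra property this is still in the span, it is nonnegative, and by orthonormality $\int_I\tau_p^2\dt=1$, so $\tau_p^2$ is bounded and integrable; the error term is then bounded by $\|\delta\tilde{p}_i-\delta p\|_\infty\int_I\tau_p^2\,|C_{\text{x}k}x+C_{\text{u}k}u-b_k|\dt<\infty$, and the contradiction closes. Without this weighting (or an equivalent one), your contradiction argument does not go through.

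The second inclusion, $\mathcal{C}\subset\lim_{s\rightarrow\infty}\mathcal{C}^s_\text{U}$, has two further problems. First, your construction imports hypotheses the lemma does not grant: strict feasibility of the origin ($b>0$ and $0\in\mathcal{X}$) for the interpolation step, and $b\geq 0$ for the tail truncation (multiplying by a cutoff $\chi\in[0,1]$ preserves $\chi(C_\text{x}x+C_\text{u}u)\leq\chi b\leq b$ only when $b\geq 0$). Second, the final step suffers from the same infinite-measure defect as above: the span approximants given by the density hypothesis converge to $(\hat{x},\hat{u})$ uniformly, but membership in $\lim_{s\rightarrow\infty}\mathcal{C}^s_\text{U}=\overline{\bigcup_s\mathcal{C}^s_\text{U}}$ requires convergence in $L^2_n\times L^2_m$, and uniform convergence on $(0,\infty)$ does not imply $L^2$ convergence, so your approximants need not converge to the target in $X$ at all. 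For comparison, the paper never proves this inclusion: it proves $\lim_{s\rightarrow\infty}\mathcal{C}^s_\text{L}\subset\lim_{s\rightarrow\infty}\mathcal{C}^s_\text{U}$ in one stroke by contradiction (with the burden absorbed into how non-membership is negated) and obtains the converse inclusion for free from $\mathcal{C}^s_\text{U}\subset\mathcal{C}\subset\mathcal{C}^s_\text{L}$. Your route is cleaner conceptually but strictly more demanding, and the extra piece it demands is exactly where your argument, as written, fails.
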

\begin{proof}
The idea of the proof is the following: We claim that $\lim_{s\rightarrow \infty} \mathcal{C}^s_\text{U} \supset \lim_{s \rightarrow \infty} \mathcal{C}^s_\text{L}$. We assume that the claim is incorrect and show that this leads to a contradiction. Thus, we choose $(x,u,x_T) \in \lim_{s\rightarrow \infty} \mathcal{C}^s_\text{L}$, such that there exists an open set $U$ (bounded) and a $k\in \{1,2,\dots,n_c\}$, for which 
\begin{equation}
\int_{I} \delta v ( - C_{\text{x}k} x -C_{\text{u}k} u + b_k ) \dt < 0 \label{eq:tmpbegin}
\end{equation}
holds for all smooth test functions $\delta v: I \rightarrow \mathbb{R}$, $\delta v \geq 0$, with support in $U$, and $\delta v(t_0)>0$ for some $t_0 \in U$. Due to the smoothness of the test functions, $\delta v(t_0)>0$ readily implies that there is an open neighborhood of $t_0$, denoted by $\mathcal{N}(t_0)$, such that $\delta v(t)>0$, $\forall t\in \mathcal{N}(t_0)$.
The above integral exists, since $\delta v$ is bounded, has compact support and $x\in L^2_n$, $u\in L^2_m$. We fix $t_0 \in U$ and pick one of these variations that is positive, strictly positive at time $t_0$, and has support in $U$, which we name $\delta p$.
Due to the fact that the basis functions are dense in the set of smooth functions with compact support, there exists a sequence $\sqrt{\delta \tilde{p}_i}$ that converges uniformly to $\sqrt{\delta p}$. Due to the fact that the basis functions form an algebra, $\delta \tilde{p}_i$ lies likewise in the span of the basis functions. Moreover, for a given $\epsilon > 0$ (small enough) there exits an integer $N>0$ such that
\begin{equation*}
||\delta \tilde{p}_i - \delta p||_\infty < C_1 \epsilon
\end{equation*}
holds for all integers $i> N$, where $C_1>0$ is constant. 

We claim that there is an integer $p>0$ such that the basis function $\tau_p: I \rightarrow \mathbb{R}$ is nonzero for $t_0$. This can be shown by a contradiction argument: If the claim was not true, then all basis functions $\tau^p=(\tau_1,\tau_2,\dots,\tau_p)$ would be zero at time $t_0$. The basis functions fulfill Assumption A2, and hence the first order differential equation $\dot{\tau}^p = M_p \tau^p$, which is guaranteed to have unique solutions. From $\tau^p(t_0)=0$, $\dot{\tau}^p(t_0)=0$ we can infer that $\tau^p(t)=0$ for all $t\in (0,\infty)$ is the (unique) solution  to $\dot{\tau}^p=M_p \tau^p$, contradicting Assumption A1. 

Thus, we can choose the basis function $\tau_p$ that is nonzero for $t_0$ and hence also nonzero in a neighborhood around $t_0$, due to the smoothness of the basis functions. The same applies to the function $\tau_p^2$, which is likewise contained in the set of basis functions (the basis functions form an algebra that is closed under multiplication). Moreover, due to the fact that the basis functions are orthonormal, it follows that
\begin{equation}
\int_{I} \tau_p^2 \dt = 1. \label{eq:normalizationp}
\end{equation} 
In the same way, the function $\delta \tilde{p}_i(t) \tau_p^2(t)$ is also contained in the set of basis functions, is non-negative for all $t\in I$, and is bounded and integrable for all integers $i>0$ ($|\tau^p|$ is bounded, see \ref{App:BasisFunProp}). Thus, it follows that the integral
\begin{equation}
\int_{I} \tau_p^2 \delta \tilde{p}_i (-C_{\text{x}k} x - C_{\text{u}k} u + b_k) \dt
\end{equation}
exists.
By assumption $(x,u,x_T)\in \lim_{s\rightarrow \infty} \mathcal{C}^s_\text{L}$, and therefore
\begin{align}
0&\leq \int_{I} \tau_p^2 \delta \tilde{p}_i(-C_{\text{x}k} x - C_{\text{u}k} u + b_k) \dt\\
&=\int_{I} \tau_p^2 \delta p(-C_{\text{x}k} x - C_{\text{u}k} u + b_k) \dt \label{eq:tmpneg2}\\
&+\int_{I} \tau_p^2 (\delta \tilde{p}_i - \delta p) (- C_{\text{x}k} x - C_{\text{u}k} u + b_k) \dt \label{eq:tmpsmall2},
\end{align}
where the last term can be bounded by
\begin{equation}
\epsilon C_1 \int_{I} \tau_p^2 |C_{\text{x}k} x + C_{\text{u}k} u - b_k |\dt.
\end{equation}
The above integral is bounded due to the fact that $\tau_p^2$ is bounded and integrable, $x\in L_n^2$, $u\in L_m^2$, and that $b_k$ is bounded, and therefore \eqref{eq:tmpsmall2} can be made arbitrarily small by sufficiently increasing $i$. However, this leads to a contradiction, since \eqref{eq:tmpneg2} is strictly negative, according to \eqref{eq:tmpbegin}. This proves that $\lim_{s\rightarrow \infty} \mathcal{C}^s_\text{U} \supset \lim_{s\rightarrow \infty}\mathcal{C}^s_\text{L}$. The desired result is then established due to the fact that $\mathcal{C}^s_\text{U} \subset \mathcal{C}^s_\text{L}$ holds for all integers $s>0$.\hfill \qed
\end{proof}
\section{Recursive feasibility and closed-loop stability}
\label{App:Stab}
\begin{proposition}
Provided that the optimization \eqref{eq:mpc1} is feasible at time $t=0$, it remains feasible for all times $t>0$, and the resulting closed-loop system is guaranteed to be asymptotically stable.
\end{proposition}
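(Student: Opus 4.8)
The plan is to exploit the time-shift structure induced by Assumption A2 for recursive feasibility, and then to use the optimal value of \eqref{eq:mpc1} as a Lyapunov function for stability. The decisive structural fact is that, since $\dot{\tau}^s = M_s \tau^s$, we have $\tau^s(t+\delta) = e^{M_s \delta}\tau^s(t)$ for any $\delta>0$; consequently, shifting a parametrized trajectory in time by $\delta$ keeps it inside $X^s$ and merely replaces $\eta_x$ by $(I_n \otimes e^{M_s\T\delta})\eta_x$ (and $\eta_u$ by $(I_m \otimes e^{M_s\T\delta})\eta_u$), exactly as in \eqref{eq:shift}.

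First I would establish recursive feasibility. Let $(\eta_x^*,\eta_u^*)$ be the optimizer of \eqref{eq:mpc1} for the measured state $x_0$, with associated trajectories $\tilde{x}^*,\tilde{u}^*$. In the nominal closed loop the state evolves along $\tilde{x}^*$ because the dynamics constraint holds exactly, so after a sampling interval $\delta$ the state equals $\tilde{x}^*(\delta)$. I would then propose the shifted pair $\eta_x' = (I_n \otimes e^{M_s\T\delta})\eta_x^*$, $\eta_u' = (I_m \otimes e^{M_s\T\delta})\eta_u^*$ as a feasible candidate for the re-solved problem and verify three things: (i) the shifted trajectory $(\tilde{x}^*(\cdot+\delta),\tilde{u}^*(\cdot+\delta))$ satisfies $\dot{x}=Ax+Bu$ by time-invariance of the linear dynamics, so the shifted parameters satisfy the equality constraint of \eqref{eq:mpc1}; (ii) the initial-condition constraint holds because $(I_n\otimes\tau(0))\T\eta_x' = \tilde{x}^*(\delta)$, which is precisely the new state; and (iii) the semi-infinite constraint holds, since $C_\text{x}\tilde{x}^*(t+\delta)+C_\text{u}\tilde{u}^*(t+\delta)\leq b$ for all $t\geq 0$ follows from the original trajectory satisfying it on $[\delta,\infty)\subset[0,\infty)$. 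Feasibility for all $t>0$ then follows by induction.

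Second, for stability I would use the optimal value $V(x_0)$ of \eqref{eq:mpc1} as a Lyapunov function. By Assumption A1, orthonormality gives $\int_I \tilde{x}\T Q\tilde{x}\dt = \eta_x\T(Q\otimes I_s)\eta_x$ and analogously for the input, so $V(x_0)$ equals the integral of the running cost $\tilde{x}^*\T Q\tilde{x}^* + \tilde{u}^*\T R\tilde{u}^*$ over $[0,\infty)$. The shifted candidate from the feasibility argument has cost equal to the tail integral over $[\delta,\infty)$, i.e. $V(x_0) - \int_0^\delta(\tilde{x}^*\T Q\tilde{x}^*+\tilde{u}^*\T R\tilde{u}^*)\dt$. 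Optimality at the next step then yields the descent inequality
\begin{equation*}
V(\tilde{x}^*(\delta)) \leq V(x_0) - \int_0^\delta (\tilde{x}^*\T Q \tilde{x}^* + \tilde{u}^*\T R \tilde{u}^*)\dt .
\end{equation*}
Since $Q>0$ and $R>0$, the value function is positive definite with $V(0)=0$, so this strict decrease establishes asymptotic stability by the standard Lyapunov argument; summing the inequality along the closed loop moreover bounds $\int_0^\infty(\tilde{x}\T Q\tilde{x}+\tilde{u}\T R\tilde{u})\dt \leq V(x_0)<\infty$, forcing the closed-loop state to the origin.

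The main obstacle is making the shift argument rigorous within the finite-dimensional parametrization: one must confirm that the shifted trajectory remains in $X^s$, which is exactly what Assumption A2 guarantees, and that the parameter-space equality constraint is preserved under the shift operator $(I_n\otimes e^{M_s\T\delta})$. A secondary point requiring care is that the terminal requirement $\lim_{t\to\infty}\tilde{x}(t)=0$ is inherited by the shifted trajectory, which holds because the basis functions decay (Sec.~\ref{App:Decay}), and that $V$ is genuinely positive definite and continuous near the origin, so that the descent inequality yields Lyapunov stability and not merely attractivity.
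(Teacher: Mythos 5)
Your proposal is correct and follows essentially the same route as the paper's proof: recursive feasibility via the time-shifted candidate $(I_n \otimes e^{M_s\T T_\text{d}})\eta_x$, $(I_m \otimes e^{M_s\T T_\text{d}})\eta_u$ enabled by Assumption A2, followed by using the optimal cost $J^{\text{MPC}}$ as a Lyapunov function with the identical descent inequality. Your additional remarks (explicit verification of the three constraint types, the summation argument bounding the total closed-loop cost, and the continuity of the value function near the origin) are refinements of steps the paper treats more briefly, not a different approach.
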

\begin{proof}
The proof is taken from \cite{muehlebachParametrized} and included for completeness. The following notation is introduced: The closed-loop state and input trajectories are denoted by $x(t)$ and $u(t)$. The predicted trajectories are referred to as $\tilde{x}(t_\text{p}|t)$, $\tilde{u}(t_\text{p}|t)$, where $t_\text{p}>0$ denotes the prediction horizon. For $t_\text{p}=0$, the prediction matches the true trajectory, that is $\tilde{x}(0|t)=x(t)$, $\tilde{u}(0|t)=u(t)$ for all $t\in [0,\infty)$. The predictions $\tilde{x}(t_\text{p}|t)$, $\tilde{u}(t_\text{p}|t)$ are obtained by solving \eqref{eq:mpc1} subject to the initial condition $x_0=x(t)$, which yields the parameters $\eta_x$ and $\eta_u$ defining $\tilde{x}(t_\text{p}|t)$ and $\tilde{u}(t_\text{p}|t)$ by
\begin{equation}
\tilde{x}(t_\text{p}|t)=(I_n \otimes \tau(t_\text{p}))\T \eta_x, \quad \tilde{u}(t_\text{p}|t)=(I_m \otimes \tau(t_\text{p}))\T \eta_u.
\end{equation}
In order to highlight the dependence on the initial condition, the resulting optimal cost of \eqref{eq:mpc1} is denoted by $J^{\text{MPC}}(x(t))$.

By assumption, \eqref{eq:mpc1} is feasible at time $t=0$. The resulting trajectories $\tilde{x}(t_\text{p}|0)$, $\tilde{u}(t_\text{p}|0)$ fulfill the equations of motion, the initial condition $\tilde{x}(0|0)=x(0)$, and the constraints and hence, the system evolves according to $x(t)=\tilde{x}(t|0)$, $u(t)=\tilde{u}(t|0)$, $\forall t\in [0,T_\text{d})$. Due to the time-shift property of the basis functions implied by Assumption A2, the feasible candidates 
\begin{align}
\begin{split}
\tilde{x}(t_\text{p}+T_\text{d}|0)=(I_n \otimes \tau(t_\text{p})\T) (I_n \otimes \exp(M_s T_\text{d})\T)\eta_x,\\
\tilde{u}(t_\text{p}+T_\text{d}|0)=(I_m \otimes \tau(t_\text{p})\T) (I_m \otimes \exp(M_s T_\text{d})\T)\eta_u
\end{split}
\label{eq:abovetmp}
\end{align}
for the optimization at time $T_\text{d}$ can be constructed from the optimizer $\eta_x$ and $\eta_u$ at time $0$. As a result, recursive feasibility of \eqref{eq:mpc1} follows by induction.

We will show that the optimal cost $J^{\text{MPC}}$ acts as a Lyapunov function. The function $J^{\text{MPC}}$ is a valid Lyapunov candidate since $J^{\text{MPC}}(x)>0$ for all $x \neq 0$ and $J^{\text{MPC}}(x)=0$ if and only if $x=0$. Due to the fact that the shifted trajectories $\tilde{x}(t_\text{p}+T_\text{d}|0)$ and $\tilde{u}(t_\text{p}+T_\text{d}|0)$ (as defined in \eqref{eq:abovetmp}) are feasible for the optimization at time $T_\text{d}$, the following upper bound on $J^{\text{MPC}}(x(T_\text{d}))$ can be established
\begin{equation}
J^{\text{MPC}}(x(T_\text{d})) \leq \int_{T_\text{d}}^{\infty} \!\!\!\! \tilde{x}(t_\text{p}|0)\T Q \tilde{x}(t_\text{p}|0) + \tilde{u}(t_\text{p}|0)\T R \tilde{u}(t_\text{p}|0) \dt_\text{p}.
\end{equation}
The right-hand side can be rewritten as
\begin{equation}
J^{\text{MPC}}(x(0)) - \int_{0}^{T_\text{d}} \!\!\!\! \tilde{x}(t_\text{p}|0)\T Q \tilde{x}(t_\text{p}|0) + \tilde{u}(t_\text{p}|0)\T R \tilde{u}(t_\text{p}|0) \dt_\text{p},
\end{equation}
resulting in
\begin{align}
J^{\text{MPC}}&(x(T_\text{d})) - J^{\text{MPC}}(x(0)) \leq \label{eq:rhsreftmp} \\
&-\int_{0}^{T_\text{d}} \tilde{x}(t_\text{p}|0)\T Q \tilde{x}(t_\text{p}|0) + \tilde{u}(t_\text{p}|0)\T R \tilde{u}(t_\text{p}|0) \dt_\text{p}. \nonumber
\end{align}
The right-hand side of \eqref{eq:rhsreftmp} is guaranteed to be strictly negative, except for $x(0)=0$, and thus, by induction, $J^{\text{MPC}}(x(kT_\text{d}))$ is strictly decreasing, which concludes the proof. \hfill \qed
\end{proof}
\section{Reduction of the semi-infinite constraint}\label{App:Decay}
The following section discusses the reduction of the semi-infinite constraint 
\begin{equation}
a_\text{l} \leq \tau(t)\T \eta \leq a_\text{u}
\end{equation}
over the interval $t \in [0,\infty)$, with $a_\text{l},a_\text{u}\in \mathbb{R}$, $a_\text{l}<0$, $a_\text{u}>0$ to a compact interval. Thereby we consider the symmetric case, where $|a_\text{l}|=|a_\text{u}|$ first, before discussing the asymmetric case $|a_\text{l}|\neq |a_\text{u}|$. It is shown that the length of this compact time interval depends only on the properties of the basis functions and on the ratio between $|a_\text{u}|$ and $|a_\text{l}|$. Both $a_\text{l}$ and $a_\text{u}$ are assumed to be finite. 
\subsection{The symmetric case}
\begin{proposition}\label{Prop:Decay}
Provided that the basis functions fulfill Assumptions A1 and A2 for all $t \in [0,\infty)$ there exists a positive real number $T_\text{c}$ such that
\begin{equation}
\sup_{t\in [0,\infty)} |\tau(t)\T \eta| = \max_{t\in [0,T_\text{c}]} |\tau(t)\T \eta|
\end{equation}
holds for all parameter vectors $\eta \in \mathbb{R}^s$.
\end{proposition}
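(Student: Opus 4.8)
The plan is to combine the explicit representation of $\tau$ furnished by Assumption A2 with the integrability and independence furnished by Assumption A1. First I would use A2 to write the trajectory in closed form: the relation $\dot{\tau}^s=M_s\tau^s$ integrates to $\tau(t)=e^{M_s t}\tau(0)$, so that for every $\eta$ the scalar function $t\mapsto\tau(t)\T\eta=\tau(0)\T e^{M_s\T t}\eta$ is a solution of a constant-coefficient linear ODE, hence real-analytic and, coordinatewise, a finite sum of terms $p(t)e^{\lambda t}$ with $\lambda$ an eigenvalue of $M_s$ and $p$ a polynomial. The first substantive step is to turn A1 into an exponential decay estimate: since each $\tau_i\in L_1^2$, no mode with $\operatorname{Re}\lambda\geq 0$ can appear (a term $p(t)e^{\lambda t}$ with $\operatorname{Re}\lambda\geq 0$ and $p\not\equiv 0$, including the purely oscillatory case $\operatorname{Re}\lambda=0$ with $p$ constant, fails to be square integrable on $(0,\infty)$). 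Consequently every coordinate decays exponentially, and there exist constants $C,\alpha>0$, independent of $\eta$, with $|\tau(t)|\leq C e^{-\alpha t}$ for all $t\geq 0$. This yields the tail bound $|\tau(t)\T\eta|\leq C e^{-\alpha t}|\eta|$.

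Next, by homogeneity it suffices to treat $|\eta|=1$, and I would establish a uniform lower bound on a fixed finite window. Fix any $T_0>0$ and set $g(\eta):=\max_{t\in[0,T_0]}|\tau(t)\T\eta|$. The function $g$ is continuous and positive on the unit sphere: if $g(\eta)=0$ then $\tau(t)\T\eta$ vanishes on $[0,T_0]$, and since $t\mapsto\tau(t)\T\eta$ is real-analytic (being a solution of the linear ODE from A2), the identity theorem forces it to vanish on all of $[0,\infty)$, contradicting the linear independence of the $\tau_i$ asserted in A1. As a continuous positive function on the compact sphere $g$ attains a minimum $c>0$, giving $\max_{t\in[0,T_0]}|\tau(t)\T\eta|\geq c\,|\eta|$ for every $\eta\in\mathbb{R}^s$.

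Finally I would close the gap by choosing $T_\text{c}\geq T_0$ large enough that $C e^{-\alpha T_\text{c}}\leq c$. For any $t>T_\text{c}$ the tail bound gives $|\tau(t)\T\eta|\leq C e^{-\alpha t}|\eta|\leq c\,|\eta|\leq\max_{t\in[0,T_0]}|\tau(t)\T\eta|\leq\max_{t\in[0,T_\text{c}]}|\tau(t)\T\eta|$. Splitting $[0,\infty)=[0,T_\text{c}]\cup(T_\text{c},\infty)$ and noting that the contribution of the tail is dominated by the maximum over $[0,T_\text{c}]$, while trivially $\max_{[0,T_\text{c}]}\leq\sup_{[0,\infty)}$, I obtain the claimed equality, with $T_\text{c}$ depending only on $C,\alpha,c$, i.e.\ only on the basis functions. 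The main obstacle I expect is the second step: extracting a genuinely uniform exponential decay of the whole vector $\tau(t)$ with constants independent of $\eta$ out of the merely qualitative $L_1^2$-membership in A1, together with the analyticity argument that upgrades vanishing on the finite window $[0,T_0]$ to vanishing on $[0,\infty)$ so that A1's independence can be invoked; everything else is bookkeeping once the decay rate $\alpha$ and the window constant $c$ are in hand.
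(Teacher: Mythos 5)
Your proof is correct, and its skeleton matches the paper's: normalize to $|\eta|=1$, establish an exponential tail bound on $|\tau(t)|$, obtain a uniform positive lower bound $c$ on the windowed maximum over the unit sphere by continuity and compactness, and then choose $T_\text{c}$ so that the tail drops below $c$. The genuine differences lie in how the two key ingredients are established. For the decay bound, the paper takes asymptotic stability of $M_s$ as given and uses a quadratic Lyapunov function to get $|\tau(t)|^2\leq C_2 e^{-c_2 t}$; you instead derive the decay from the modal form $\tau(t)=e^{M_s t}\tau(0)$ plus square-integrability, which makes this step self-contained in A1--A2 (and even covers the case where $M_s$ has unstable modes that $\tau(0)$ does not excite), but it rests on the fact that a nonzero exponential sum all of whose exponents have nonnegative real part cannot lie in $L^2$ on $(0,\infty)$; your parenthetical only disposes of a single term $p(t)e^{\lambda t}$, and one must still rule out cancellation among several non-decaying modes (e.g. $e^{i\omega_1 t}-e^{i\omega_2 t}$ is bounded, and is excluded from $L^2$ only by a mean-value or almost-periodicity argument) --- a point you flag but do not close. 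For positivity of the windowed maximum, the paper proves linear independence of the basis functions on $[0,T_\text{i}]$ quantitatively, by showing the Gram matrix $\int_0^{T_\text{i}}\tau\tau\T\dt$ is positive definite once $1>(C_2/c_2)e^{-c_2 T_\text{i}}$; you instead use real-analyticity of $t\mapsto\tau(t)\T\eta$ and the identity theorem to upgrade vanishing on an arbitrary window $[0,T_0]$ to vanishing on all of $[0,\infty)$, contradicting A1 directly. Your route is shorter and works for any window; the paper's route buys explicit constants (the window length $T_\text{i}$ and the smallest Gram eigenvalue $c_3$) that it reuses later in the asymmetric case, Prop.~\ref{Prop:DecayAsym}.
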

\begin{proof}
Without loss of generality we restrict the parameter vectors to have unit magnitude, that is, $|\eta|=1$.\footnote{The claim holds trivially for $\eta=0$; in case $\eta\neq 0$ we can always normalize $\eta$.}

We prove the claim in 4 steps. We first derive an exponentially decaying upper bound on the Euclidean norm of the basis function vector $\tau$. We then use this bound to argue that the basis functions are linearly independent over the interval $[0,T_\text{i}]$ (the scalar $T_\text{i}$ will be determined). The third step consists of constructing a lower bound on 
\begin{equation}
\max_{t\in [0,T_\text{i}]} |\tau(t)\T \eta|
\end{equation}
that holds for all parameter vectors $\eta$ with $|\eta|=1$. Linear independence of the basis functions on $[0,T_\text{i}]$ will be used to argue that this lower bound is strictly positive. In the last step, we show that if $t$ is sufficiently large, $|\tau(t)\T \eta|$ will be below this lower bound, which concludes the proof.

Step 1): The fact that $M_s$ is asymptotically stable implies that there exists a quadratic Lyapunov function that decays exponentially. This provides a means to establish the following bound
\begin{equation}
|\tau(t)|^2 \leq C_2 e^{-c_2 t},\quad \forall t\in [0,\infty),  \label{eq:BoundTau}
\end{equation}
where $C_2>0$, $c_2>0$ are constant. 

Step 2): We use the Gram matrix to argue that the basis functions are linearly independent over the interval $[0,T_\text{i}]$. According to \cite[p.~2, Thm.~3]{SansoneOrtho} it holds that the basis functions are linearly independent in the set $[0,T_\text{i}]$ if and only if the matrix 
\begin{equation}
\int_{0}^{T_\text{i}} \tau \tau\T \dt\label{eq:Gram}
\end{equation}
has full rank. This is the case if the bilinear form
\begin{equation}
v\T  \int_{0}^{T_\text{i}} \tau \tau\T \dt~v
\end{equation}
is strictly positive for all $v \in \mathbb{R}^s$ with $|v|=1$. Combining the fact that the basis functions are orthonormal with the Cauchy-Schwarz inequality, leads to the following lower bound of the above bilinear form,
\begin{equation}
1- \int_{T_\text{i}}^{\infty} (v\T \tau)^2 \dt \geq 1- \int_{T_\text{i}}^{\infty} |\tau|^2 \dt.
\end{equation}
Using the upper bound \eqref{eq:BoundTau} we therefore obtain
\begin{equation}
v\T  \int_{0}^{T_\text{i}} \tau \tau\T \dt v\geq 1- \int_{T_\text{i}}^{\infty} C_2 e^{-c_2 t} \dt = 1-\frac{C_2}{c_2} e^{-c_2 T_\text{i}},
\end{equation}
for all $v \in \mathbb{R}^s$ with $|v|=1$.
Thus, we fix $T_\text{i}>0$, such that 
\begin{equation}\label{eq:posDef1}
1> \frac{C_2}{c_2} e^{-c_2 T_\text{i}},
\end{equation}
implying that the matrix \eqref{eq:Gram} is positive definite and has therefore full rank. Consequently, the basis functions are guaranteed to be linearly independent on the interval $[0,T_\text{i}]$.

Step 3): We claim that 
\begin{equation}
c^*:=\inf_{|\eta|=1} \max_{t \in [0,T_\text{i}]} |\tau(t)\T \eta| \label{eq:infeta}
\end{equation}
is well-defined and strictly positive. To that extent we first prove that the function $g: \mathbb{R}^s \rightarrow [0,\infty)$, 
\begin{equation}
g(\eta):=\max_{t\in [0,T_\text{i}]} |\tau(t)\T \eta|
\end{equation}
is continuous (in fact Lipschitz-continuous). Therefore we consider two parameter vectors $\eta_1$ and $\eta_2$ with $g(\eta_1) \leq g(\eta_2)$ (without loss of generality).
From the fact that $g(\eta_2)$ can be rewritten as $g(\eta_1 + (\eta_2-\eta_1))$ and by invoking the triangle inequality it can be inferred that
\begin{equation}
g(\eta_1 + (\eta_2-\eta_1)) \leq \max_{t \in [0,T_\text{i}]} |\tau(t)\T \eta_1| + |\tau(t)\T (\eta_2-\eta_1)|.
\end{equation}
By noting that the maximum of the above sum is smaller than the sum of the summand's maxima it can be concluded that
\begin{equation}
g(\eta_2) \leq g(\eta_1) + \max_{t \in [0,T_\text{i}]} |\tau(t)\T (\eta_2-\eta_1)|.
\end{equation}
Moreover, by combining the Cauchy-Schwarz inequality and the bound \eqref{eq:BoundTau} we obtain
\begin{equation}
|g(\eta_2)-g(\eta_1)| \leq \sqrt{C_2} |\eta_2-\eta_1|,
\end{equation}
showing that the function $g$ is indeed (Lipschitz) continuous. As a result, the Bolzano-Weierstrass theorem asserts that the infimum in \eqref{eq:infeta}, is attained and well-defined. 
It remains to argue that $c^* > 0$. For the sake of contradiction, we assume $c^*=0$. This implies the existence of the minimizer $\eta^*$, with $|\eta^*|=1$, which fulfills
\begin{equation}
\max_{t\in [0,T_\text{i}]} |\tau(t)\T \eta^*|=0.
\end{equation}
As a result, it follows that $\tau(t)\T \eta^*$ is zero for all $t\in [0,T_\text{i}]$, which contradicts the fact that the basis functions $\tau$ are linearly independent on $[0,T_\text{i}]$.

Step 4): From the upper bound \eqref{eq:BoundTau} and the Cauchy-Schwarz inequality it follows that
\begin{equation}
|\tau(t)\T \eta| \leq \sqrt{C_2} e^{-\frac{c_2 t}{2}}, \quad \forall t\in [0,\infty),
\end{equation}
and for all $\eta\in \mathbb{R}^s$ with $|\eta|=1$. Clearly, $c^*\leq \sqrt{C_2}$ and therefore we can choose the time $T_\text{c}$ such that 
\begin{equation}
c^*=\sqrt{C_2} e^{-\frac{c_2 T_\text{c}}{2}}, 
\end{equation}
implying
\begin{equation}
\sup_{t\in (T_\text{c},\infty)} |\tau(t)\T \eta| < c^* \leq \max_{t\in [0,T_\text{i}]} |\tau(t)\T \eta|,
\end{equation}
for all $\eta \in \mathbb{R}^s$ with $|\eta|=1$. This proves the claim.
\end{proof}

\subsection{The asymmetric case}
\begin{proposition}\label{Prop:DecayAsym}
Provided that the basis functions fulfill Assumptions A1 and A2 for all $t\in [0,\infty)$ there exists a positive real number $T_c$ such that 
\begin{equation}
a_\text{l} \leq \tau(t)\T \eta \leq a_\text{u}, \quad \forall t\in [0,T_c]
\end{equation}
implies 
\begin{equation}
a_\text{l} \leq \tau(t)\T \eta \leq a_\text{u}, \quad \forall t\in [0,\infty), 
\end{equation}
for any $\eta \in \mathbb{R}^s$, where $a_\text{l}, a_\text{u} \in \mathbb{R}$, $a_\text{l}<0, a_\text{u}>0$.
\end{proposition}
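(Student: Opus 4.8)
The plan is to reduce the asymmetric case to the machinery already developed for the symmetric case (Proposition~\ref{Prop:Decay}), the crucial new ingredient being that feasibility on the initial compact interval forces an a priori bound on the magnitude of $\eta$. Writing $a_{\max}:=\max\{|a_\text{l}|,a_\text{u}\}$ and $a_{\min}:=\min\{|a_\text{l}|,a_\text{u}\}$, I would carry over from the proof of Proposition~\ref{Prop:Decay} three facts: the exponentially decaying envelope
\begin{equation}
|\tau(t)\T \eta| \leq \sqrt{C_2}\, e^{-\frac{c_2 t}{2}}\,|\eta|, \quad \forall t\in[0,\infty),
\end{equation}
the interval length $T_\text{i}$ on which the basis functions are linearly independent, and the associated strictly positive constant $c^*=\inf_{|\eta|=1}\max_{t\in[0,T_\text{i}]}|\tau(t)\T \eta|>0$. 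By homogeneity of the map $\eta\mapsto \max_{t\in[0,T_\text{i}]}|\tau(t)\T \eta|$, this constant yields the lower bound $\max_{t\in[0,T_\text{i}]}|\tau(t)\T \eta|\geq c^*|\eta|$, valid for every $\eta\in\mathbb{R}^s$ (trivially so for $\eta=0$).

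Next I would fix $T_c$ to be any number satisfying both $T_c\geq T_\text{i}$ and
\begin{equation}
\sqrt{C_2}\, e^{-\frac{c_2 T_c}{2}}\, \frac{a_{\max}}{c^*} \leq a_{\min},
\end{equation}
which is always possible since the left-hand side tends to $0$ as $T_c\to\infty$; explicitly, any $T_c\geq \frac{2}{c_2}\ln\big(\sqrt{C_2}\,a_{\max}/(c^* a_{\min})\big)$ will do. As announced in the statement, this threshold depends only on the basis functions, through $C_2,c_2,c^*,T_\text{i}$, and on the ratio $a_{\max}/a_{\min}$.

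The verification then runs as follows. Assuming $a_\text{l}\leq \tau(t)\T \eta\leq a_\text{u}$ holds for all $t\in[0,T_c]$, and since $T_c\geq T_\text{i}$, the same bound holds on $[0,T_\text{i}]$, so $\max_{t\in[0,T_\text{i}]}|\tau(t)\T \eta|\leq a_{\max}$; combined with the homogeneity lower bound this gives the key estimate $|\eta|\leq a_{\max}/c^*$. Feeding this into the decay envelope, for every $t> T_c$ I obtain
\begin{equation}
|\tau(t)\T \eta| \leq \sqrt{C_2}\, e^{-\frac{c_2 t}{2}}\,|\eta| \leq \sqrt{C_2}\, e^{-\frac{c_2 T_c}{2}}\,\frac{a_{\max}}{c^*} \leq a_{\min},
\end{equation}
so that $\tau(t)\T \eta\in[-a_{\min},a_{\min}]\subseteq[a_\text{l},a_\text{u}]$. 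Together with the assumed satisfaction on $[0,T_c]$, this establishes $a_\text{l}\leq \tau(t)\T \eta\leq a_\text{u}$ for all $t\in[0,\infty)$, as claimed.

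The main obstacle to watch for is exactly the feature distinguishing this case from the symmetric one: there the parameter vector could be normalized without loss of generality, whereas here the fixed, asymmetric bounds $a_\text{l},a_\text{u}$ prohibit normalization and couple the argument to the actual size of $\eta$. The whole proof therefore hinges on converting feasibility over the compact interval into the bound $|\eta|\leq a_{\max}/c^*$ via the strictly positive constant $c^*$ — this confines the admissible parameter vectors to a bounded set and lets the exponential decay eventually dominate the (asymmetrically smaller) tail bound $a_{\min}$.
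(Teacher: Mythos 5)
Your proof is correct, and it shares the paper's overall skeleton — convert feasibility on an initial compact interval into an a priori bound on $|\eta|$, then let exponential decay push the tail below the tighter of the two bounds — but your technical implementation is genuinely different and noticeably more elementary. The paper bounds $|\eta|$ through an $L^2$ estimate: with $c_3$ the smallest eigenvalue of the Gram matrix $\int_0^{T_\text{i}} \tau \tau\T \dt$, feasibility gives $c_3|\eta|^2 \leq \int_0^{T_\text{i}} |\tau\T\eta|^2 \dt \leq T_\text{i}|a_\text{u}|^2$; you instead reuse the sup-norm constant $c^*$ from Step 3 of Proposition~\ref{Prop:Decay} together with homogeneity, obtaining $c^*|\eta| \leq a_{\max}$ directly. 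More significantly, for the decay estimate the paper sets up a companion-form ODE for $\tilde{f}=\tau\T\eta$ via Cayley--Hamilton, solves a Lyapunov equation $P\hat{M}+\hat{M}\T P + Q = 0$, and tracks eigenvalue ratios $\lambda_Q/\lambda^P$, $\lambda^{\hat{H}\T\hat{H}}$, whereas you simply apply Cauchy--Schwarz to the already-established envelope $|\tau(t)|\leq \sqrt{C_2}\,e^{-c_2 t/2}$ from Step 1 of Proposition~\ref{Prop:Decay}. Both routes yield a threshold $T_c$ depending only on basis-function constants and the ratio $a_{\max}/a_{\min}$, consistent with the claim in the paper's preamble to this section; your version buys brevity and reuses existing constants, while the paper's Lyapunov construction additionally yields decay of the derivatives $\tilde{f}^{(1)},\dots,\tilde{f}^{(s-1)}$ (a byproduct it does not actually need here). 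One small point worth making explicit in your write-up: the containment $[-a_{\min},a_{\min}]\subseteq[a_\text{l},a_\text{u}]$ is exactly where the asymmetry is absorbed, and it holds because $a_\text{l}<0<a_\text{u}$ — you use this correctly, but it deserves a sentence.
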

\begin{proof}
We define $\tilde{f}:= \tau\T \eta$ and establish upper and lower bounds on $\tilde{f}$. Without loss of generality we assume $\eta \neq 0$. Combining Assumption A2 with the Caley-Hamilton theorem leads to
\begin{align}
\tilde{f}^{(s)}(t) + a_1 \tilde{f}^{(s-1)}(t) + \dots + a_s \tilde{f}(t)&=
\eta\T \tau^{(s)}(t) + a_1 \eta\T \tau^{(s-1)}(t) + \dots + a_s \eta\T \tau(t)\\
&=\eta\T \left( M^s + a_1 M^{s-1} + \dots + a_s M \right) \tau(t)=0,
\end{align}
where $a_1, a_2, \dots a_s$ are the coefficients of the characteristic polynomial of the matrix $M$. Thus, the trajectory $\tilde{f}$ and its time derivatives fulfill the following set of differential equations
\begin{equation}
\dot{f}=\underbrace{\left( \begin{array}{ccccc} 0 & 1 & 0 & \dots & 0\\
									0 & 0 & 1 & \dots & 0\\
									\vdots  &  \vdots &   & \ddots & \vdots\\
									-a_s & -a_{s-1} & -a_{s-2} & \dots & -a_1\end{array} \right)}_{:=\hat{M}} f, \quad f(0)=\underbrace{\left(\begin{array}{c} \tau(0)\T \\
\tau(0)\T M\T \\
\vdots \\
\tau(0)\T (M^{(s-1)})\T \end{array}\right)}_{:=\hat{H}} \eta,
\end{equation}
where $f:=(\tilde{f}, \tilde{f}^{(1)}, \dots, \tilde{f}^{(s-1)})$. The matrix $\hat{M}$ is Hurwitz and therefore, due to the Lyapunov theorem, there exists a symmetric matrix $P>0$, $P \in \mathbb{R}^{s \times s}$ that satisfies
\begin{equation}
P \hat{M} + \hat{M}\T P + Q = 0 \label{eq:AppLyap}
\end{equation}
for any symmetric matrix $Q>0$, $Q \in \mathbb{R}^{s \times s}$. We fix the positive definite matrix $Q$ and consider the quadratic Lyapunov function $V(t)=f(t)\T P f(t)$, where $P$ satisfies \eqref{eq:AppLyap}. The time derivative of $V$ can be upper bounded by
\begin{align}
\dot{V}= - f\T Q f \leq -\lambda_Q |f|^2 &\leq -\frac{\lambda_Q}{\lambda^P} f\T P f\\
& \leq - \frac{\lambda_Q}{\lambda^P} V,
\end{align}
where the minimum eigenvalue of $Q$ is denoted by $\lambda_Q$ and the maximum eigenvalue of $P$ is denoted by $\lambda^P$. As a result, this yields the upper bound
\begin{equation}
V(t) \leq V(0) e^{-\frac{\lambda_Q}{\lambda^P} t} \leq \lambda^P \lambda^{\hat{H}\T \hat{H}} |\eta|^2 e^{-\frac{\lambda_Q}{\lambda^P} t},
\end{equation}
where $\lambda^{\hat{H}\T \hat{H}}$ denotes the maximum eigenvalue of the matrix $\hat{H}\T \hat{H}$.
According to the proof of Prop.~\ref{Prop:Decay}, we may choose $T_\text{i}$ such that 
\begin{equation}
1 > \frac{C_2}{c_2} e^{-c_2 T_\text{i}},
\end{equation}
implying that the basis functions are linearly independent on the interval $[0,T_\text{i}]$ (see \eqref{eq:posDef1} and Prop.~\ref{Prop:Decay} for the definition of the constants $c_2$ and $C_2$). Linear independence can be used to establish the following lower bound, c.f. Prop.~\ref{Prop:Decay}:
\begin{equation}
\int_{0}^{T_\text{i}} |\tilde{f}|^2 \dt = \eta\T \int_{0}^{T_\text{i}} \tau \tau\T \dt~ \eta \geq c_3 |\eta|^2,
\end{equation}
where the constant $c_3>0$ denotes the minimum eigenvalue of the matrix 
\begin{equation}
\int_{0}^{T_\text{i}} \tau \tau\T \dt.
\end{equation}
Without loss of generality it is assumed that $|a_\text{l}| \leq |a_\text{u}|$. Choosing the real number $T_\text{c}>T_\text{i}$ implies that the constraint is imposed over the interval $[0,T_\text{i}]$ and therefore
\begin{equation}
\int_{0}^{T_\text{i}} |\tilde{f}|^2 \dt \leq T_\text{i} |a_\text{u}|^2.
\end{equation}
Combined with the above upper bound on $|\eta|^2$ it follows that
\begin{equation}
|\eta|^2 \leq \frac{T_\text{i} |a_\text{u}|^2}{c_3}.
\end{equation}
This can be used to upper bound the squared magnitude of $\tilde{f}(t)$, that is,
\begin{align}
|\tilde{f}(t)|^2 \leq \frac{1}{\lambda_P} V(t) \leq \frac{\lambda^P \lambda^{\hat{H}\T \hat{H}} |a_\text{u}|^2 T_\text{i}}{\lambda_P c_3} e^{-\frac{\lambda_Q}{\lambda^P} t},
\end{align}
where $\lambda_P$ denotes the minimum eigenvalue of the matrix $P$.
As a result, we may choose $T_\text{c}$ such that the above upper bound is below $|a_\text{l}|^2$. This may be achieved by choosing $T_\text{c} > \max\{ \hat{t}, T_\text{i}\}$, where 
\begin{equation}
\hat{t}:= - \frac{\lambda^P}{\lambda_Q} ~\left(2~\text{ln}\left( \frac{|a_\text{u}|}{|a_\text{l}|}\right) ~+ \text{ln}\left(\frac{c_3 \lambda_P}{T_\text{i} \lambda^P \lambda^{\hat{H}\T \hat{H}}}\right) \right)
\end{equation}
and $\text{ln}$ refers to the natural logarithm.
\end{proof}

\section{Additional properties}\label{App:BasisFunProp}
In the following we will discuss some of the properties of the basis functions that fulfill Assumptions A1 and A2.

\subsection{Conditions imposed on $M_s$}
The fact that the basis functions are orthonormal on the interval $I=(0,T)$ implies
\begin{align}
\tau(T)\tau(T)\T - \tau(0) \tau(0)\T &= \int_{I} \frac{\diff}{\dt} (\tau \tau\T) \dt\\
&= M_s + M_s\T.
\end{align}
In case the interval $I$ has infinite measure, that is, $I=(0,\infty)$, the above formula reduces naturally to 
\begin{equation}
-\tau(0) \tau(0)\T = M_s + M_s\T. \label{eq:condMsinf}
\end{equation}

\subsection{Bounds on the Euclidean norm}
In case the interval $I$ has infinite measure, that is, $I=(0,\infty)$, it holds that 
\begin{equation}
|\tau(t)|\leq |\tau(0)|, \quad \forall t\in [0,\infty).
\end{equation}
This results from
\begin{align}
\frac{\diff}{\dt}( \tau(t)\T \tau(t) )&= \tau(t)\T (M_s + M_s\T) \tau(t)\\
&=-|\tau(0)\T \tau(t)|^2 \leq 0,
\end{align}
for all $t\in [0,\infty)$, which follows from \eqref{eq:condMsinf}.


\subsection{Finite number of minima and maxima}
In the following we will argue that a function that is not everywhere zero and spanned by basis functions fulfilling Assumptions A1 and A2, has a finite number of minima and maxima. The function will be denoted by $\tilde{f}:=\tau\T \eta$, $\eta \in \mathbb{R}^s, \eta \neq 0$, where the function and the basis functions are defined over the interval $t\in I=(0,T)$. If $I$ has infinite measure, it follows from Sec.~\ref{App:Decay} that $\tilde{f}$ takes is maxima and minima within a compact interval, and hence, without loss of generality, we assume in the following that $I$ has finite measure. Moreover, due to the Caley-Hamilton theorem it holds that
\begin{equation}
\tilde{f}^{(s)}(t) + a_1 \tilde{f}^{(s-1)}(t) + \dots + a_s \tilde{f}(t) = 0,
\end{equation}
for all $t\in I$, where the $a_k$, $k=1,2,\dots,s$ are the coefficients of the characteristic polynomial of the matrix $M_s$. According to \cite{FunCurves}, the time derivative of $\tilde{f}$ has at most $s-1$ zeros on any subinterval of length $l$, where 
\begin{equation}
\sum_{k=1}^{s} \frac{a_k l^k}{k!} < 1.
\end{equation}
This proves readily that $\tilde{f}$ has a finite number of minima and maxima in the interval $I$.

\bibliography{literature}
\bibliographystyle{abbrv}

\end{document}